\newtheorem{theorem}{Theorem}
\newtheorem{lemma}{Lemma}
\newtheorem{proposition}{Proposition}
\newtheorem{assumption}{Assumption}
\newtheorem{corollary}{Corollary}
\newtheorem{remark}{Remark}[section]
\newtheorem{example}{Example}
\crefname{assumption}{Assumption}{Assumptions}
\crefname{subassumptioni}{Assumption}{Assumptions}
\crefname{section}{Appendix}{Appendices}
\newlist{subassumption}{enumerate}{1}
\setlist[subassumption,1]{
  label=(\roman*),   
  ref=\theassumption(\roman*), 
  leftmargin=2em
}
\newcommand{\E}{\operatorname{\mathbb{E}}}
\newcommand{\B}{\mathbb{B}}
\newcommand{\R}{\mathbb{R}}
\newcommand{\G}{\mathbb{G}}
\newcommand{\weakto}{\rightsquigarrow}
\DeclareMathOperator*{\argmax}{argmax}
\DeclareMathOperator*{\argmin}{argmin}
\title{\textbf{
     Generalized Method of Moments\\
     with Partially Missing Data}\footnote{
     We appreciate helpful comments and suggestions from Andres Santos. 
     Research reported in this publication was supported by the National Institute on Aging of the National Institutes of Health and in part by the Social Security Administration under Award Number U01AG077280. The content is solely the responsibility of the authors and does not necessarily represent the official views of the National Institutes of Health.
}
}
\date{\today}
\author[1]{\textsc{Grigory Franguridi}}
\author[2]{\textsc{Hyungsik Roger Moon}}
\affil[1]{Center for Economic and Social Research, University of Southern California  \vspace{1ex}}
\affil[2]{Department of Economics, University of Southern California}
\begin{document}

\maketitle

\begin{abstract}  
\linespread{1.2}

We consider a generalized method of moments framework in which a part of the data vector is missing for some units in a completely unrestricted, potentially endogenous way.
In this setup, the parameters of interest are usually only partially identified.
We characterize the identified set for such parameters using the support function of the convex set of moment predictions consistent with the data.
This identified set is sharp, valid for both continuous and discrete data, and straightforward to estimate.
We also propose a statistic for testing hypotheses and constructing confidence regions for the true parameter, show that standard nonparametric bootstrap may not be valid, and suggest a fix using the bootstrap for directionally differentiable functionals of \citet{fang2019inference}.
A set of Monte Carlo simulations demonstrates that both our estimator and the confidence region perform well when samples are moderately large and the data have bounded supports.

\medskip 

\noindent \textbf{Keywords:} endogenous attrition, partial identification, panel data, support function

\medskip

\noindent \textbf{JEL codes:} C14, C23
\end{abstract}

\newpage

\section{Introduction}

Missing data are ubiquitous in empirical research, particularly in repeated-measurement settings. \citet{rubin1976inference} classified mechanisms of missingness into three categories: missing completely at random (MCAR), missing at random (MAR), and missing not at random (MNAR). In this paper, we consider the last case, where missingness may be correlated with the data in an arbitrary manner.  
More specifically, we consider a generalized method of moments framework in which a portion of the data vector is missing for some units in a completely unrestricted, potentially endogenous way.
A representative example is a two-period panel where, in the second period, some units drop out of the sample in a way that is correlated with both outcomes and covariates in both periods.

When no restriction on missingness is imposed, the parameters of interest are usually only partially identified.
We characterize the identified set for the parameter of interest via the support function of the set of moments consistent with the observed data. We show that the identified set is sharp, is valid for both continuous and discrete data, and is straightforward to estimate.

We also show how to perform hypothesis testing and construct confidence regions using the estimate of the minimum of the support function as a test statistic. We derive the limit distribution of this test statistic and provide a procedure for computing critical values, following \citet{fang2019inference}. We also show that the test controls size locally uniformly using an abstract result in \citet{fang2019inference}.

Partial identification with missing data has been widely studied since Manski's seminal work, \citet{Manski1989anatomy} (see, e.g., \citet{Manski2005partial} and \citet{Molinari2020microeconometrics} for a survey). Our paper contributes to this extensive literature, with its key feature being that the identified set of moment predictions is convex, and the support function of this set can be estimated using a simple sample analog. 

Our model is neither a moment inequality model with a convex identified set (e.g., \citet{KaidoSantos2014}), nor the intersection bounds model of \citet{ChernozhukovLeeRosen2013intersection}. It is similar to the model with convex moment predictions as in \citet{BeresteanuMolchanovMolinari2011}, the main difference being that our characterization of the sharp identified set does not rely on a representation via random closed sets.

As mentioned above, the primary application of our model is a panel regression with attrition that may be endogenous. The analysis of panel data with attrition has a long history, and numerous techniques for handling attrition under various assumptions have been proposed. These include modelling the attrition process parametrically (e.g., \citet{HausmanWise1979}), using auxiliary information in the form of refreshment samples (e.g., \citet{hirano2001combining}), imposing semiparametric assumptions on the attrition process (e.g., \citet{bhattacharya2008inference}), using inverse probability weighting (e.g., \citet{Wooldridge2002}), selection models with unobserved heterogeneity (e.g., \citet{SemykinaWooldridge2010}), and multiple imputation and Bayesian methods (e.g., \citet{DengEtAlSurvey}).

The rest of the paper is organized as follows.
Section \ref{sec:framework} introduces the general framework and characterizes the sharp identified set.
Section \ref{sec:estimation} suggests an estimator of the identified set.
Section \ref{sec:inference} develops a bootstrap procedure for hypothesis testing.
Section \ref{sec:simulation} explores the performance of our methodology in a set of Monte Carlo simulations.
Section \ref{sec:conclusion} concludes.
All proofs are given in the Appendix.

\section{Setup and the identified set}\label{sec:framework}

We consider the generalized method of moments framework when some of the observed variables may be missing for some observations.
Specifically, let $Z_{1i}$ be a data vector that we observe for each unit $i=1,\dots,n$ of the sample and let $Z_{2i}$ be a data vector that may be missing for some units.
Let $S_i\in \{0,1\}$ be the sample selection indicator, which is equal to 1 when $Z_{2i}$ is observed and 0 otherwise.
Dropping the subscript $i$, the observed vector is then $W=(S,Z_1',S Z_2')'$.
A key aspect of our framework is that we allow arbitrary dependence of $S$ on both observed and unobserved features.
In particular, $S$ can be arbitrarily correlated with $Z_2$, representing endogenous sample selection.
We do not allow for overlap between elements of $Z_1$ and $Z_{2}$.
More generally, no features of the distribution of $Z_2$ are assumed to be identified from the random sample of $Z_1$.
We are interested in a finite-dimensional parameter $\theta\in\Theta\subset \R^{d_\theta}$ that is defined by general moment conditions
\begin{align}
    \E_{\pi}[\phi(Z_1,Z_2,\theta)]=0, \label{eq:GMM}
\end{align}
where $\phi$ is a $d_\phi$-dimensional moment function and $\pi$ is the joint distribution of $Z_1,Z_2$.
Of course, $\pi$ is not point-identified due to the presence of missing data (see below), and hence neither is $\theta$.

One special case of this framework is two-period panel data with unrestricted attrition in the second period.
To see this, let $Z_t=(X_t',Y_t) \in \R^d$ be the stacked vector of covariates and outcomes in period $t=1,2.$
In period 1, a random sample from $Z_1$ is observed.
In period 2, there is attrition, and so $Z_2$ is only observed when $S=1$, where $S\in\{0,1\}$ is the indicator of staying in the sample.
When the attrition is fully unrestricted and the parameter of interest is defined by a set of moment conditions \eqref{eq:GMM} (such as, for instance, the slope coefficient in the fixed effects linear regression), this model becomes a special case of the GMM with missing data described above.
For expositional clarity, we use the terminology related to this panel model throughout the rest of the paper.

We impose the following assumptions.

\begin{assumption}\label{a:moment-and-support}
    \begin{subassumption}
        \item \label{a:Theta-compact} The parameter space $\Theta$ is compact.
        \item \label{a:mu} The distribution $\pi$ of $(Z_1,Z_2)$ belongs to the set $\mathcal{P}(\mathcal{Z}_1\times \mathcal{Z}_2)$ of probability distributions on $\mathcal{Z}_1\times \mathcal{Z}_2$, where $\mathcal{Z}_1, \mathcal{Z}_2 \subset \R^d$ are known compact sets.
        \item \label{a:phi-continuous} For each $\theta\in\Theta$, the map $(z_1,z_2) \mapsto \phi(z_1,z_2,\theta)$ is continuous.
        \item \label{a:E-phi-continuous} For each $\pi \in \mathcal{P}(\mathcal{Z}_1\times \mathcal{Z}_2)$, the map $\theta \mapsto \E_{\pi} [\phi(Z_1,Z_2,\theta)]$ is continuous.
        
    \end{subassumption}
\end{assumption}

\Cref{a:Theta-compact,a:E-phi-continuous,a:phi-continuous} are standard and impose compactness of the parameter space and the continuity of moments.
\Cref{a:mu} posits that the researcher knows the compact set $\mathcal{Z}_1 \times \mathcal{Z}_2$ that contains (but does not have to be equal to) the true support of the data.
We impose no restrictions on the data-generating process beyond these basic requirements.

To characterize the identified set for $\theta$, let $p=P(S=1) \in (0,1)$ be the unconditional selection probability and notice that $\pi$ can be decomposed into the point-identified distribution of $(Z_1,Z_2)|S=1$, denoted by $\pi^1$, and the partially identified distribution of $(Z_1,Z_2)|S=0$, denoted by $\pi^0$, viz.,
\begin{align*}
    \pi = p \pi^1 + (1-p)\pi^0.
\end{align*}
Denote by $\pi_1^0$ the point-identified probability distribution of $Z_1|S=0$.
The sharp identified set for $\pi^0$ is the convex set of the distributions satisfying \Cref{a:mu} with first marginal $\pi_1^0$, i.e.,
\begin{align*}
    \Pi^0 = \left\{\pi^0 \in \mathcal{P}(\mathcal{Z}_1\times \mathcal{Z}_2): \,\, \pi^0(A \times \mathcal Z_2) = \pi_1^0(A) \text{ for all Borel sets }A \subset \mathcal{Z}_1 \right\}.
\end{align*}
Therefore, the sharp identified set for $\pi$ is
\begin{align*}
    \Pi = \left\{p \pi^1 + (1-p) \pi^0, \,\, \pi^0 \in \Pi^0 \right\}.
\end{align*}
Write
\begin{align}
     \E_{\pi}[\phi(Z_1,Z_2,\theta)] = p \E_{\pi^1}[\phi(Z_1,Z_2,\theta)] + (1-p) \E_{\pi^0} [\phi(Z_1,Z_2,\theta)] =: \nu_{\pi^0}(\theta).
\end{align}
Then the sharp identified set for $\theta$ is
\begin{align*}
    \Theta_I = \left\{\theta\in\Theta: \,\, \exists \pi^0 \in \Pi^0 \text{ such that } \nu_{\pi^0}(\theta)=0 \right\}.
\end{align*}
The next proposition establishes a convenient characterization of $\Theta_I$ in terms of the support function of the compact, convex set\footnote{See Step 1 in \Cref{app:id-set} for the proof of convexity and compactness of $N(\theta)$.} of moment predictions
\begin{align*}
    N(\theta) := \left\{\nu_{\pi^0}(\theta):\,\,\pi^0\in\Pi^0 \right\}.
\end{align*}

\begin{proposition}[identified set via support function]\label{prop:id-set}
    Suppose \Cref{a:moment-and-support} holds and let
    \begin{align*}
        \psi_{N(\theta)}(u) = \max_{\nu\in N(\theta)} u'\nu
    \end{align*}
    be the support function of $N(\theta)$. Define the criterion function
    \begin{align*}
        Q(\theta) = \min_{u\in \B} \psi_{N(\theta)}(u),
    \end{align*}
    where $\B$ is the unit ball in $\R^{d_\phi}$.
    Then $\Theta_I$ is compact and given by
    \begin{align}
        \Theta_I = \left\{ \theta\in\Theta: \,\, Q(\theta) = 0 \right\}.
    \end{align}
\end{proposition}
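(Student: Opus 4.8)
The plan is to reduce the statement to elementary convex geometry of the set $N(\theta)$, using that, by Step 1 of the appendix, $N(\theta)$ is convex and compact, and that it is nonempty (it contains the true, if unidentified, conditional distribution $\pi^0$). First, I would rewrite $\Theta_I$ directly from its definition: $\theta\in\Theta_I$ iff there exists $\pi^0\in\Pi^0$ with $\nu_{\pi^0}(\theta)=0$, which is exactly the assertion that $0\in N(\theta)=\{\nu_{\pi^0}(\theta):\pi^0\in\Pi^0\}$. So it suffices to show $\{\theta\in\Theta: 0\in N(\theta)\}=\{\theta\in\Theta: Q(\theta)=0\}$ and that this set is compact.

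The core is the equivalence $0\in N(\theta)\iff Q(\theta)=0$. I would prove it using two facts about the support function: positive homogeneity, and the standard characterization that for a nonempty closed convex set $N(\theta)$, one has $0\in N(\theta)$ iff $\psi_{N(\theta)}(u)\ge 0$ for every $u\in\R^{d_\phi}$. The forward direction is immediate since $u'0=0\le\max_{\nu\in N(\theta)}u'\nu$; the converse is the separating-hyperplane theorem: if $0\notin N(\theta)$, strict separation of the point $0$ from the compact convex set $N(\theta)$ yields $u\neq 0$ with $\sup_{\nu\in N(\theta)}u'\nu<0$, i.e.\ $\psi_{N(\theta)}(u)<0$. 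Now note $\psi_{N(\theta)}(0)=0$ and $0\in\B$, so $Q(\theta)=\min_{u\in\B}\psi_{N(\theta)}(u)\le 0$ for every $\theta$; hence $Q(\theta)=0$ iff $\psi_{N(\theta)}(u)\ge 0$ for all $u\in\B$, and by positive homogeneity this is equivalent to $\psi_{N(\theta)}(u)\ge 0$ for all $u\in\R^{d_\phi}$. Combining, $Q(\theta)=0\iff 0\in N(\theta)$, which with the first step gives the displayed formula for $\Theta_I$. (The $\max$ and $\min$ are attained because $N(\theta)$ and $\B$ are compact and support functions are Lipschitz.)

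For compactness of $\Theta_I$: since $\Theta_I\subset\Theta$ and $\Theta$ is compact by \Cref{a:Theta-compact}, it is enough to show $\Theta_I$ is closed; because $Q\le 0$ everywhere, $\Theta_I=\{Q=0\}=\{Q\ge 0\}$, so upper semicontinuity of $Q$ suffices, and I would in fact establish continuity of $Q$. By a standard continuity-of-the-minimum argument over the fixed compact index set $\B$ (Berge's theorem), this reduces to joint continuity of $(\theta,u)\mapsto\psi_{N(\theta)}(u)$. Writing $\psi_{N(\theta)}(u)=p\,u'\E_{\pi^1}[\phi(Z_1,Z_2,\theta)]+(1-p)\sup_{\pi^0\in\Pi^0}\E_{\pi^0}[u'\phi(Z_1,Z_2,\theta)]$, the first term is continuous by \Cref{a:E-phi-continuous} and linear in $u$; for the second I would use the explicit description of $\Pi^0$ as the distributions dominated by $\mu$ with fixed first marginal $\pi_1^0$ (Step 1 of the appendix), \Cref{a:phi-continuous}, and dominated convergence, exploiting that $\phi(\cdot,\cdot,\theta)$ is uniformly bounded over the compact set $\mathcal{Z}_1\times\mathcal{Z}_2$ from \Cref{a:mu}.

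The main obstacle is this last step: controlling the supremum over $\pi^0\in\Pi^0$ uniformly enough to obtain the joint continuity of $(\theta,u)\mapsto\psi_{N(\theta)}(u)$ (equivalently, continuity of $\theta\mapsto N(\theta)$ in the Hausdorff metric). The first two steps are routine once convexity, compactness, and nonemptiness of $N(\theta)$ are available; the real content is that the continuity of $\phi$ together with the compact-support assumptions propagates through the $\sup_{\pi^0}$ defining the support function. The remainder is standard convex analysis.
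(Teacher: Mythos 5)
Your central step---the equivalence $0\in N(\theta)\iff Q(\theta)=0$---is correct and is essentially the paper's own argument (its Step 2): the same separating-hyperplane proof of the converse direction and the same use of positive homogeneity to pass between nonnegativity of $\psi_{N(\theta)}$ on all of $\R^{d_\phi}$ and the minimum over $\B$. Your observation that $Q\le 0$ everywhere, so that $\{Q=0\}=\{Q\ge 0\}$, is a mild streamlining of the paper's equivalent reasoning, and the reduction of $\Theta_I$ to $\{\theta:0\in N(\theta)\}$ is identical.

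Two caveats, one of which is substantive. First, you import convexity and compactness of $N(\theta)$ from the appendix; in a self-contained proof you need at least closedness of $N(\theta)$ for the strict separation step, and the paper supplies it by equipping $\Pi^0$ with the weak topology, noting that $\Pi^0$ is weakly compact (using compactness of $\mathcal{Z}_1\times\mathcal{Z}_2$) and that $\pi^0\mapsto\nu_{\pi^0}(\theta)$ is weakly continuous. Second, and more importantly, the compactness of $\Theta_I$ rests on continuity of $Q$, which you correctly reduce via Berge's theorem to joint continuity of $(u,\theta)\mapsto\psi_{N(\theta)}(u)$, but you then leave this as ``the main obstacle'' with only a sketch via dominated convergence over conditional densities. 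That sketch does not close the gap: for each fixed $\pi^0$ the map $(u,\theta)\mapsto\E_{\pi^0}[u'\phi(Z_1,Z_2,\theta)]$ is continuous (by \Cref{a:E-phi-continuous} and linearity in $u$), but a pointwise supremum over the infinite family $\Pi^0$ of continuous functions is in general only lower semicontinuous; upper semicontinuity requires either equicontinuity or a compactness argument over the index set. The paper resolves this precisely by treating $\Pi^0$ as a compact index set: $(u,\theta,\pi^0)\mapsto u'\nu_{\pi^0}(\theta)$ is continuous on the compact set $\B\times\Theta\times\Pi^0$, so Berge's maximum theorem applied twice yields continuity of $\psi_{N(\theta)}(u)$ and then of $Q$. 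Adopting that weak-compactness argument (or otherwise establishing the requisite uniformity) is needed to complete your proof of compactness of $\Theta_I$.
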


\begin{proof}
    See \Cref{app:id-set}.
\end{proof}

\begin{remark}
    In the above characterization, the Euclidean unit ball ~$\B$ can be replaced by any compact set containing the origin in its interior, such as the unit ball in the $\ell^1$ norm or the uniform norm. This may turn out to be convenient computationally.
\end{remark}

\begin{remark}
    A similar strategy was recently employed by \citet{franguridi2025inference} for characterizing the identified set in a partially identified GMM framework based on optimal transport.
\end{remark}

The support function $\psi_{N(\theta)}$ can be written as 
\begin{align*}
    \psi_{N(\theta)}(u) &= \max_{\pi\in \Pi} u'\E_\pi[\phi(Z_1,Z_2,\theta)]\\
    & = p \E[u'\phi(Z_1,Z_2,\theta)|S=1] + (1-p) \max_{\pi^0\in\Pi^0} \E_{\pi^0}[u'\phi(Z_1,Z_2,\theta]].
\end{align*}
The last term involves optimization over an infinite-dimensional set of distributions. Fortunately, we can convert it into a finite-dimensional program for every value of $Z_1$, as the following proposition shows.

\begin{proposition}[reduction formula] \label{prop:reduction} Suppose \Cref{a:moment-and-support} holds. Then
    \begin{align*}
        \max_{\pi^0\in\Pi^0} \E_{\pi^0}[u'\phi(Z_1,Z_2,\theta)] = \E\left[\max_{z_2\in\mathcal{Z}_2} u'\phi(Z_1,z_2,\theta) |S=0 \right].
    \end{align*}
\end{proposition}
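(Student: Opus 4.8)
The statement is an equality between two real numbers, so the plan is to prove the two inequalities separately. First observe that the right-hand side is well defined: by \Cref{a:phi-continuous} and compactness of $\mathcal{Z}_2$, Berge's maximum theorem makes $z_1\mapsto \max_{z_2\in\mathcal{Z}_2}u'\phi(z_1,z_2,\theta)$ continuous, hence bounded and measurable on the compact set $\mathcal{Z}_1$, so the conditional expectation exists.

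For ``$\le$'', fix any $\pi^0\in\Pi^0$ and disintegrate it along its first marginal, $\pi^0(dz_1,dz_2)=\pi_1^0(dz_1)\,K(z_1,dz_2)$, which is legitimate because $\pi_1^0$ is, by definition of $\Pi^0$, the first marginal of every member of $\Pi^0$. Then $\E_{\pi^0}[u'\phi(Z_1,Z_2,\theta)]=\int\big(\int u'\phi(z_1,z_2,\theta)\,K(z_1,dz_2)\big)\pi_1^0(dz_1)\le\int \max_{z_2\in\mathcal{Z}_2}u'\phi(z_1,z_2,\theta)\,\pi_1^0(dz_1)$, which is the right-hand side; taking the supremum over $\pi^0\in\Pi^0$ preserves this. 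The absolute-continuity restriction in \Cref{a:mu} plays no role here, since dropping it only enlarges $\Pi^0$.

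For ``$\ge$'', the plan is to build a sequence $\pi^0_m\in\Pi^0$ along which $\E_{\pi^0_m}[u'\phi(Z_1,Z_2,\theta)]\to\E[\max_{z_2}u'\phi(Z_1,z_2,\theta)\mid S=0]$. The naive candidate — the kernel $\delta_{z_2^*(z_1)}$ concentrated at a pointwise maximizer — is in general inadmissible because it fails absolute continuity with respect to $\mu$; removing this obstruction is the crux of the argument. The remedy is: (i) extract a measurable selection $z_1\mapsto z_2^*(z_1)\in\argmax_{z_2\in\mathcal{Z}_2}u'\phi(z_1,z_2,\theta)$ by a measurable-selection theorem (Kuratowski--Ryll-Nardzewski), the argmax correspondence being nonempty, compact-valued and measurable thanks to continuity of $\phi$; (ii) disintegrate $\mu(dz_1,dz_2)=\mu_1(dz_1)\,\mu(dz_2\mid z_1)$ and set $K_m(z_1,\cdot)$ equal to $\mu(\cdot\mid z_1)$ restricted to the radius-$1/m$ ball about $z_2^*(z_1)$ and renormalized, so that $\pi^0_m:=\pi_1^0\otimes K_m$ has first marginal $\pi_1^0$ and satisfies $\pi^0_m\ll\mu$ (because $\pi_1^0\ll\mu_1$ and $K_m(z_1,\cdot)\ll\mu(\cdot\mid z_1)$), whence $\pi^0_m\in\Pi^0$; (iii) let $m\to\infty$: uniform continuity of $\phi$ on the compact $\mathcal{Z}_1\times\mathcal{Z}_2$ forces $\int u'\phi(z_1,z_2,\theta)\,K_m(z_1,dz_2)\to u'\phi(z_1,z_2^*(z_1),\theta)=\max_{z_2}u'\phi(z_1,z_2,\theta)$ pointwise in $z_1$, and as these quantities are uniformly bounded by $\|u\|\sup_{\mathcal{Z}_1\times\mathcal{Z}_2}\|\phi(\cdot,\cdot,\theta)\|$, dominated convergence yields the limit and hence the lower bound.

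Besides the absolute-continuity obstruction, the delicate points are the measurability of the selection $z_2^*(\cdot)$ and the well-posedness of the renormalized kernels $K_m$ — that is, that $\mu(\cdot\mid z_1)$ charges every neighborhood of $z_2^*(z_1)$ — which holds when $\mathcal{Z}_2$ is taken to be the support of the relevant conditionals of $\mu$; under that reading the supremum in the reduction formula is approached along the $\pi^0_m$ but need not be attained by any single $\pi^0\in\Pi^0$.
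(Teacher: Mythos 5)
Your proof is correct and follows essentially the same route as the paper's: the upper bound by disintegrating $\pi^0$ along its first marginal and bounding the inner integral by the pointwise maximum, and the lower bound by approximating the point mass at the maximizer with normalized restrictions of $\mu(\cdot\mid z_1)$ to shrinking neighborhoods of $z_2^*(z_1)$. You are somewhat more explicit than the paper about the measurable-selection step and about the requirement that $\mu(\cdot\mid z_1)$ charge every neighborhood of the maximizer --- a condition the paper's proof asserts without comment when it writes $\mu(U_\varepsilon\mid z_1)>0$ --- but the substance of the argument is the same.
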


\begin{proof}
    See \Cref{app:reduction}.
\end{proof}

\begin{figure}
    \centering
    \includegraphics[width=0.8\textwidth]{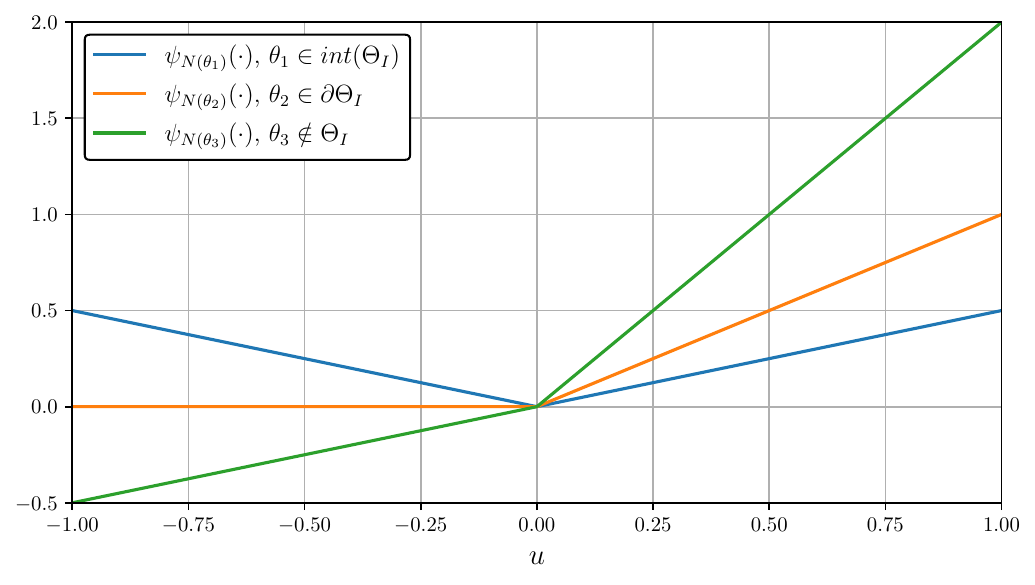}
    \caption{Set identification via support functions}
    \label{fig:univariate}
\end{figure}

\begin{example}\label{example:univariate}
Suppose there is only one moment condition, i.e., $\phi$ is scalar-valued.
In this case, the set of moment predictions is the closed interval 
\begin{align*}
    N(\theta) = \left[ \min_{\pi\in\Pi} \E_\pi[\phi(Z_1,Z_2,\theta)],\,\, \max_{\pi\in\Pi} \E_\pi[\phi(Z_1,Z_2,\theta)] \right],
\end{align*}
The support function of $N(\theta)$ is a piecewise linear function given by
\begin{align*}
    \psi_{N(\theta)}(u) = \max_{\pi\in\Pi} u \E_\pi[\phi(Z_1,Z_2,\theta)] = 
    \begin{cases}
        u \cdot \min_{\pi\in\Pi} \E_\pi[\phi(Z_1,Z_2,\theta)], &\text{if } u < 0,\\
        u \cdot \max_{\pi\in\Pi} \E_\pi[\phi(Z_1,Z_2,\theta)], &\text{if } u \ge 0,
    \end{cases}
\end{align*}
and hence the criterion function is 
\begin{align*}
    Q(\theta) = \min_{u\in \B} \psi_{N(\theta)}(u) = \min\left\{0,\,\, \max_{\pi\in\Pi} \E_\pi[\phi(Z_1,Z_2,\theta)], \,\,- \min_{\pi\in\Pi} \E_\pi[\phi(Z_1,Z_2,\theta)] \right\}.
\end{align*}
The identification condition $Q(\theta) = 0$ can be easily seen to be equivalent to
\begin{align*}
    \min_{\pi\in\Pi} \E_\pi[\phi(Z_1,Z_2,\theta)] \le 0 \le \max_{\pi\in\Pi} \E_\pi[\phi(Z_1,Z_2,\theta)],
\end{align*}
which, of course, is just the condition that zero belongs to the set of moment predictions.
See \Cref{fig:univariate} for an illustration of how the argmax set of $\psi_{N(\theta)}$ determines the position of $\theta$ relative to the identified set $\Theta_I$.
Notice also that using the reduction formula in \Cref{prop:reduction}, we can write
\begin{align*}
    \min_{\pi\in\Pi} \E_\pi[\phi(Z_1,Z_2,\theta)] = \E[f_{\min}(W,\theta)], \\
    \max_{\pi\in\Pi} \E_\pi[\phi(Z_1,Z_2,\theta)] = \E[f_{\max}(W,\theta)],
\end{align*}
where $W=(S,Z_1',S Z_2')'$ is the data vector and 
\begin{align*}
    f_{\min}(W,\theta) = S \phi(Z_1,S Z_2,\theta) + (1-S) \min_{z_2\in\mathcal{Z}_2 } \phi(Z_1,z_2,\theta), \\
    f_{\max}(W,\theta) = S \phi(Z_1,S Z_2,\theta) + (1-S) \max_{z_2\in\mathcal{Z}_2} \phi(Z_1,z_2,\theta).
\end{align*}

\end{example}
\section{Estimation}\label{sec:estimation}

Combining \Cref{prop:id-set,prop:reduction} suggests the following estimator of the identified set:
\begin{align*}
    \hat\Theta_I = \left\{ \theta\in\Theta: \,\, \hat Q(\theta) \ge - \eta_n \right\},
\end{align*}
where $\eta_n>0$ is a tuning parameter and the sample criterion function
\begin{align*}
    \hat Q(\theta) = \min_{u\in\B} \hat \psi_{N(\theta)}(u) = \frac{1}{n} \sum_{i=1}^n \left( s_i u'\phi(z_{i1}, z_{i2},\theta) + (1-s_i) \max_{z_2\in\mathcal{Z}_2} u'\phi(z_{i1}, z_{2},\theta) \right).
\end{align*}
Define the distance from a point $x \in \R^{d_\phi}$ to a set $A \subset \R^{d_\phi}$ by
\begin{align*}
    d(x,A) = \inf_{a\in A} \|a-x\|.
\end{align*}
For any sets $A,B \subset \R^{d_{\theta}}$, define the Hausdorff distance
\begin{align*}
    d_H(A,B) = \max \left\{ \sup_{a\in A} d(a,B), \,\, \sup_{b\in B} d(b,A)  \right\}.
\end{align*}
We impose the following assumptions.

\begin{assumption}\label{as:separation}
    There exists a weakly increasing function $m:\R_+ \to \R_+$ such that $m(0)=0$, $m(\delta)>0$ for $\delta>0$ and $\min_{u\in \B} \psi_{N(\theta)}(u) \le - m(d(\theta,\Theta_I))$.
\end{assumption}

\begin{assumption}\label{as:eta}
    $\eta_n \downarrow 0$ and $n^{-1/2} = o(\eta_n)$.
\end{assumption}

\Cref{as:separation} imposes separation of the identified set $\Theta_I$ in terms of the criterion function $Q(\theta) = \min_{u\in \B} \psi_{N(\theta)}(u)$.
A sufficient condition can be formulated in terms of the gradient of $Q(\theta)$ away from the identified set.
\Cref{as:eta} states that $\eta_n$ has to converge to zero slower than $O(n^{-1/2})$.

The following proposition establishes the consistency of our estimator in the Hausdorff distance.

\begin{theorem}[Hausdorff consistency]\label{prop:consistency}
    Under \Cref{a:moment-and-support,as:separation,as:eta}, $d_H(\Theta_I,\hat\Theta_I)=o_p(1)$.
\end{theorem}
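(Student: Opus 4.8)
The plan is to establish the two one-sided inclusions that together control the Hausdorff distance: first, that every point of $\Theta_I$ is eventually within $o_p(1)$ of $\hat\Theta_I$ (so $\sup_{\theta\in\Theta_I} d(\theta,\hat\Theta_I)=o_p(1)$), and second, that every point of $\hat\Theta_I$ is within $o_p(1)$ of $\Theta_I$ (so $\sup_{\theta\in\hat\Theta_I} d(\theta,\Theta_I)=o_p(1)$). The key analytic input for both directions is a uniform law of large numbers: I would show that $\sup_{\theta\in\Theta}|\hat Q(\theta)-Q(\theta)|=o_p(1)$. To do this, write $\hat Q(\theta)=\min_{u\in\B}\frac1n\sum_i g(w_i,u,\theta)$ with $g(w,u,\theta)=s\,u'\phi(z_1,z_2,\theta)+(1-s)\max_{z_2\in\mathcal{Z}_2}u'\phi(z_1,z_2,\theta)$ and $Q(\theta)=\min_{u\in\B}\E[g(W,u,\theta)]$. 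Under \Cref{a:moment-and-support}, $\phi$ is continuous on the compact set $\mathcal{Z}_1\times\mathcal{Z}_2\times\Theta$ hence bounded, so $g$ is bounded; the inner maximum over the compact $\mathcal{Z}_2$ preserves continuity in $(u,\theta)$ (Berge's maximum theorem), and $(u,\theta)$ ranges over the compact set $\B\times\Theta$. A bracketing or Glivenko–Canteli argument over this compact index set, combined with the fact that $\min_{u\in\B}$ is a $1$-Lipschitz (hence continuous) operation on the sup-norm, yields the uniform convergence $\sup_\theta|\hat Q(\theta)-Q(\theta)|=o_p(1)$, which in fact holds at the parametric rate up to the $\min$ operation. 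Combined with \Cref{as:eta} ($\eta_n\downarrow0$, $n^{-1/2}=o(\eta_n)$), this gives $\sup_\theta|\hat Q(\theta)-Q(\theta)|=o_p(\eta_n)$.

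For the inclusion $\sup_{\theta\in\Theta_I}d(\theta,\hat\Theta_I)=o_p(1)$: by \Cref{prop:id-set}, every $\theta\in\Theta_I$ satisfies $Q(\theta)=0$, so on the event $\sup_\theta|\hat Q-Q|\le\eta_n$ we get $\hat Q(\theta)\ge -\eta_n$, i.e. $\theta\in\hat\Theta_I$ itself. Hence on this event, whose probability tends to one by the previous paragraph, $\Theta_I\subset\hat\Theta_I$ and so $\sup_{\theta\in\Theta_I}d(\theta,\hat\Theta_I)=0$. This direction is essentially immediate once the ULLN is in place.

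The other direction, $\sup_{\theta\in\hat\Theta_I}d(\theta,\Theta_I)=o_p(1)$, is where \Cref{as:separation} and the rate condition on $\eta_n$ do the work, and this is the main obstacle. Fix $\varepsilon>0$; I want $P(\sup_{\theta\in\hat\Theta_I}d(\theta,\Theta_I)>\varepsilon)\to0$. If $\theta\in\hat\Theta_I$ with $d(\theta,\Theta_I)>\varepsilon$, then $\hat Q(\theta)\ge-\eta_n$, while \Cref{as:separation} gives $Q(\theta)=\min_{u\in\B}\psi_{N(\theta)}(u)\le -m(d(\theta,\Theta_I))\le -m(\varepsilon)<0$ using that $m$ is weakly increasing. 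Therefore $\hat Q(\theta)-Q(\theta)\ge m(\varepsilon)-\eta_n$, which forces $\sup_\theta|\hat Q(\theta)-Q(\theta)|\ge m(\varepsilon)-\eta_n$. Since $\eta_n\to0$, eventually $m(\varepsilon)-\eta_n\ge m(\varepsilon)/2>0$, so the existence of such a $\theta$ implies $\sup_\theta|\hat Q-Q|\ge m(\varepsilon)/2$, an event of probability $o(1)$ by the ULLN. Hence $P(\sup_{\theta\in\hat\Theta_I}d(\theta,\Theta_I)>\varepsilon)\to0$ for every $\varepsilon>0$. Combining the two inclusions, $d_H(\Theta_I,\hat\Theta_I)\le\max\{0,\sup_{\theta\in\hat\Theta_I}d(\theta,\Theta_I)\}=o_p(1)$, as claimed. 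The only subtlety to check carefully is that $\hat\Theta_I$ and $\Theta_I$ are nonempty (so the suprema are over nonempty sets and $d_H$ is well defined); nonemptiness of $\Theta_I$ should be taken as a maintained hypothesis or follows from the setup, and nonemptiness of $\hat\Theta_I$ with probability tending to one follows from $\Theta_I\subset\hat\Theta_I$ on the high-probability event above.
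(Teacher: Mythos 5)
Your proposal is correct and follows essentially the same route as the paper's proof: both directions of the Hausdorff bound are obtained from uniform convergence of $\hat Q$ to $Q$ (at a rate faster than $\eta_n$, via \Cref{as:eta}), with $Q\equiv 0$ on $\Theta_I$ giving $\Theta_I\subset\hat\Theta_I$ w.p.a.\ 1 and \Cref{as:separation} ruling out points of $\hat\Theta_I$ farther than $\varepsilon$ from $\Theta_I$. The paper simply cites \Cref{prop:uclt} for the uniform rate where you sketch a bracketing argument plus the $1$-Lipschitzness of the $\min_{u\in\B}$ map; otherwise the arguments coincide.
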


\begin{proof}
    See \Cref{app:consistency}.
\end{proof}
\section{Inference}\label{sec:inference}

We now wish to develop a test of the hypothesis $H_0:\,\theta=\theta_0$.
We suggest using the statistic
\begin{align*}
    \hat T(\theta_0) = \sqrt{n} \min_{u\in\B} \hat \psi_{N(\theta_0)}(u),
\end{align*}
which is the estimator of the scaled negative distance to the identified set of moment predictions $N(\theta_0)$.
Note that more negative values of $\hat T(\theta_0)$ indicate stronger evidence against the null.
To establish the asymptotic distribution of this statistic, we first prove that the function $\hat \psi_{N(\theta)}(u)$ converges to a Gaussian process uniformly over $u\in\B$ and $\theta\in\Theta$.

\begin{assumption}\label{a:phi-finite-variance}
    $\E\left[ \sup_{\theta\in\Theta}\sup_{z_2\in\mathcal{Z}_2} \|\phi(Z_1,z_2,\theta)\|^2 \right]<\infty$.
\end{assumption}

\begin{assumption}\label{a:phi-Lipschitz}
    There exists a random variable $L(Z_1)$ such that $\E L(Z_1)^2 < \infty$ and 
    \begin{align*}
        \sup_{z_2\in\mathcal{Z}_2} \|\phi(Z_1,z_2,\theta_1)-\phi(Z_1,z_2,\theta_2)\| \le L(Z_1)\|\theta_1-\theta_2\|
    \end{align*}
    for all $\theta_1,\theta_2\in\Theta$, almost surely in $Z_1$.
\end{assumption}

\begin{theorem}[uniform CLT for the support function]\label{prop:uclt}
    Suppose \Cref{a:moment-and-support,a:phi-finite-variance,a:phi-Lipschitz} hold. Then there exists a Gaussian process $\G$ with trajectories in $C(\B\times\Theta)$ such that
    \begin{align*}
        \sqrt n (\hat \psi_{N(\theta)}(u) - \psi_{N(\theta)}(u)) \weakto \G(u,\theta).
    \end{align*}
\end{theorem}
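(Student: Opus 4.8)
The plan is to recognize $\sqrt n(\hat\psi_{N(\theta)}(u)-\psi_{N(\theta)}(u))$ as an empirical process indexed by $(u,\theta)\in\B\times\Theta$ and to verify that the associated class of functions of the observed vector $W=(S,Z_1',SZ_2')'$ is $P$-Donsker. Writing $\mathbb{P}_n$ for the empirical measure of $W_1,\dots,W_n$ and $P$ for their common law, define
\begin{align*}
    g_{u,\theta}(W) = S\,u'\phi(Z_1,Z_2,\theta) + (1-S)\max_{z_2\in\mathcal Z_2} u'\phi(Z_1,z_2,\theta),
\end{align*}
which is a genuine function of $W$ (the first term vanishes when $S=0$). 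By definition $\hat\psi_{N(\theta)}(u)=\mathbb{P}_n g_{u,\theta}$, while the display preceding \Cref{prop:reduction} together with \Cref{prop:reduction} gives $\psi_{N(\theta)}(u)=p\,\E[u'\phi\mid S=1]+(1-p)\,\E[\max_{z_2}u'\phi(Z_1,z_2,\theta)\mid S=0]=\E[g_{u,\theta}(W)]=Pg_{u,\theta}$, using $\E[Sh]=p\,\E[h\mid S=1]$ and $\E[(1-S)h]=(1-p)\,\E[h\mid S=0]$. Hence the claim is equivalent to $\sqrt n(\mathbb{P}_n-P)g_{u,\theta}\weakto\G(u,\theta)$ in $\ell^\infty(\B\times\Theta)$, and it suffices to show $\mathcal G=\{g_{u,\theta}:(u,\theta)\in\B\times\Theta\}$ is $P$-Donsker with a limit supported on $C(\B\times\Theta)$.

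First I would record measurability and an envelope. For each $W$ the map $(u,z_2)\mapsto u'\phi(Z_1,z_2,\theta)$ is continuous by \Cref{a:phi-continuous} and $\mathcal Z_2$ is compact by \Cref{a:mu}, so the inner maximum is attained and $Z_1\mapsto\max_{z_2\in\mathcal Z_2}u'\phi(Z_1,z_2,\theta)$ is continuous (Berge's maximum theorem); thus $g_{u,\theta}$ is a well-defined measurable function that is moreover continuous in $(u,\theta)$ for every fixed $W$, so $\mathcal G$ is pointwise measurable and no outer-measure issues arise. Since $\|u\|\le1$ and $Z_2\in\mathcal Z_2$, Cauchy--Schwarz yields the envelope $|g_{u,\theta}(W)|\le G(Z_1):=\sup_{\theta\in\Theta}\sup_{z_2\in\mathcal Z_2}\|\phi(Z_1,z_2,\theta)\|$, which lies in $L^2(P)$ by \Cref{a:phi-finite-variance}.

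Next I would establish a Lipschitz bound in the index. Using $|\max_a f(a)-\max_a h(a)|\le\max_a|f(a)-h(a)|$, Cauchy--Schwarz, and the $S\in\{0,1\}$ structure, for fixed $\theta$ one gets $|g_{u_1,\theta}(W)-g_{u_2,\theta}(W)|\le G(Z_1)\|u_1-u_2\|$; for fixed $u$, \Cref{a:phi-Lipschitz} gives $|g_{u,\theta_1}(W)-g_{u,\theta_2}(W)|\le L(Z_1)\|\theta_1-\theta_2\|$. Combining by the triangle inequality, $|g_{u_1,\theta_1}(W)-g_{u_2,\theta_2}(W)|\le M(Z_1)\,(\|u_1-u_2\|+\|\theta_1-\theta_2\|)$ with $M(Z_1)=G(Z_1)+L(Z_1)\in L^2(P)$. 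Thus $\mathcal G$ is a Lipschitz-in-parameter class indexed by the compact, finite-dimensional set $\B\times\Theta\subset\R^{d_\phi}\times\R^{d_\theta}$, with square-integrable Lipschitz modulus and envelope; the bracketing numbers satisfy $N_{[\,]}(\varepsilon\|M\|_{P,2},\mathcal G,L^2(P))\lesssim\varepsilon^{-(d_\phi+d_\theta)}$, so the bracketing-entropy integral is finite and $\mathcal G$ is $P$-Donsker (see, e.g., van der Vaart, 1998, Theorem~19.5 and Example~19.7, or van der Vaart and Wellner, 1996, Section~2.7).

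Finally, the functional CLT delivers $\sqrt n(\mathbb{P}_n-P)g_{u,\theta}\weakto\G$ in $\ell^\infty(\B\times\Theta)$, where $\G$ is a tight, mean-zero Gaussian process with covariance kernel $\mathrm{Cov}(g_{u_1,\theta_1}(W),g_{u_2,\theta_2}(W))$. Tightness gives a version of $\G$ with bounded paths that are uniformly continuous with respect to its intrinsic standard-deviation semimetric $\rho_P$, and the Lipschitz bound yields $\rho_P((u_1,\theta_1),(u_2,\theta_2))\le\|M\|_{P,2}(\|u_1-u_2\|+\|\theta_1-\theta_2\|)$, so $\rho_P$-continuity implies Euclidean continuity; hence $\G$ has trajectories in $C(\B\times\Theta)$, as claimed. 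The only genuinely non-mechanical point is the identity $\psi_{N(\theta)}(u)=Pg_{u,\theta}$, which is where \Cref{prop:reduction} does the real work; once the process is written as a sample mean of a fixed function class, the remainder is the standard Lipschitz-class Donsker argument, with the only mild care being the treatment of the maximum over $\mathcal Z_2$ in the envelope, the Lipschitz modulus, and measurability.
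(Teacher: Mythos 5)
Your proof is correct and follows essentially the same route as the paper: write $\sqrt n(\hat\psi_{N(\theta)}(u)-\psi_{N(\theta)}(u))$ as an empirical process over the class $\{g_{u,\theta}\}$, bound $|g_{u_1,\theta_1}-g_{u_2,\theta_2}|$ by a square-integrable multiple of $\|u_1-u_2\|+\|\theta_1-\theta_2\|$ using \Cref{a:phi-finite-variance,a:phi-Lipschitz}, and invoke the standard bracketing Donsker theorem for Lipschitz-in-parameter classes over a compact finite-dimensional index set. Your additional remarks on measurability, the $L^2$ envelope, the identity $\psi_{N(\theta)}(u)=Pg_{u,\theta}$ via \Cref{prop:reduction}, and the continuity of the limiting trajectories are details the paper leaves implicit, not a different argument.
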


\begin{proof}
    See \Cref{app:uclt}.
\end{proof}

\begin{remark}
    The results of \citet{KaidoSantos2014} suggest that the estimator $\hat\psi_{N(\theta)}(u)$ is asymptotically semiparametrically efficient for $\psi_{N(\theta)}(u)$. 
\end{remark}

An application of the functional delta method to \Cref{prop:uclt} yields the following result for the asymptotic distribution of the test statistic $\hat T(\theta)$ over $\theta\in\Theta_I$.

\begin{corollary}[asymptotic distribution of test statistic]\label{prop:lim-distribution}
    Under the assumptions of \Cref{prop:uclt}, 
    \begin{align*}
        \hat T(\theta) \weakto \min_{u\in U(\theta)} \G(u,\theta)    
    \end{align*}
    uniformly over $\theta\in\Theta_I$, where
    \begin{align*}
        U(\theta) = \argmin_{u\in \B} \psi_{N(\theta)}(u).
    \end{align*}
\end{corollary}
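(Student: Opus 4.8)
The plan is to recognize $\hat T(\theta)$, restricted to $\theta\in\Theta_I$, as the image of the empirical support‑function process under a fixed, Hadamard directionally differentiable map, and then invoke the functional delta method of \citet{fang2019inference}. Since $\theta\in\Theta_I$ forces $Q(\theta)=\min_{u\in\B}\psi_{N(\theta)}(u)=0$ by \Cref{prop:id-set}, I would write (using $\hat\psi,\psi$ for the functions $(u,\theta)\mapsto\hat\psi_{N(\theta)}(u),\psi_{N(\theta)}(u)$)
\begin{align*}
    \hat T(\theta) = \sqrt n\big(\Phi(\hat\psi)(\theta) - \Phi(\psi)(\theta)\big), \qquad \Phi(g)(\theta) := \min_{u\in\B} g(u,\theta),
\end{align*}
viewing $\Phi$ as a map from $C(\B\times\Theta)$ into $\ell^\infty(\Theta_I)$. \Cref{prop:uclt} already supplies $\sqrt n(\hat\psi-\psi)\weakto\G$ in $C(\B\times\Theta)$, so the corollary follows once $\Phi$ is shown to be Hadamard directionally differentiable at $\psi$, tangentially to $C(\B\times\Theta)$, with derivative
\begin{align*}
    \Phi'_\psi(h)(\theta) = \min_{u\in U(\theta)} h(u,\theta).
\end{align*}

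The analytic core is thus a Danskin‑type theorem made uniform over $\theta\in\Theta_I$: for all $t_n\downarrow 0$ and $h_n\to h$ in $C(\B\times\Theta)$,
\begin{align*}
    \sup_{\theta\in\Theta_I}\left| \frac{\Phi(\psi+t_n h_n)(\theta) - \Phi(\psi)(\theta)}{t_n} - \min_{u\in U(\theta)} h(u,\theta)\right| \longrightarrow 0.
\end{align*}
First I would record that $(u,\theta)\mapsto\psi_{N(\theta)}(u)$ is jointly continuous — via the reduction formula of \Cref{prop:reduction}, dominated convergence, and continuity of $\phi$ together with compactness of $\mathcal Z_2$ under \Cref{a:moment-and-support} — so that, by Berge's maximum theorem, $\theta\mapsto U(\theta)$ is upper hemicontinuous with nonempty compact convex values and $Q=\Phi(\psi)$ is continuous. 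The upper bound on the difference quotient is immediate and uniform: evaluating at any $\bar u(\theta)\in\argmin_{u\in U(\theta)}h(u,\theta)$ gives $\Phi(\psi+t_nh_n)(\theta)\le\psi_{N(\theta)}(\bar u(\theta))+t_nh_n(\bar u(\theta),\theta)=\Phi(\psi)(\theta)+t_nh_n(\bar u(\theta),\theta)$, whence the quotient is at most $\min_{u\in U(\theta)}h(u,\theta)+\|h_n-h\|_\infty$. For the lower bound I would let $\hat u_n(\theta)$ minimize the perturbed objective over $\B$; optimality yields both $\tfrac{1}{t_n}(\Phi(\psi+t_nh_n)(\theta)-\Phi(\psi)(\theta))\ge h_n(\hat u_n(\theta),\theta)$ and $\psi_{N(\theta)}(\hat u_n(\theta))\le\Phi(\psi)(\theta)+2t_n\sup_n\|h_n\|_\infty$, so any accumulation point of $(\hat u_n(\theta_n),\theta_n)$ with $\theta_n\to\theta_0\in\Theta_I$ lies in $U(\theta_0)\times\{\theta_0\}$; a contradiction argument over the compact set $\B\times\Theta_I$ then delivers the uniform lower bound. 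With the derivative identified, \citet[Theorem 2.1]{fang2019inference} gives $\hat T=\sqrt n(\Phi(\hat\psi)-\Phi(\psi))\weakto\Phi'_\psi(\G)=\min_{u\in U(\cdot)}\G(u,\cdot)$ in $\ell^\infty(\Theta_I)$, which is the assertion.

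The main obstacle is the lower‑bound half of this uniform Danskin step. The difficulty is that $U(\cdot)$ is only guaranteed upper — not lower — hemicontinuous, so $\theta\mapsto\min_{u\in U(\theta)}h(u,\theta)$ need not be upper semicontinuous, and an accumulation point of $\hat u_n(\theta_n)$ could a priori land on a "bad" point of $U(\theta_0)$. I would resolve this by exploiting the structure specific to the problem: $\psi_{N(\theta)}(\cdot)$ is convex and positively homogeneous, so on $\Theta_I$ one has $\psi_{N(\theta)}(u)\ge 0$ for all $u\in\B$, $0\in U(\theta)$, and $U(\theta)=\{u\in\B:\langle u,\nu\rangle\le 0\ \forall\,\nu\in N(\theta)\}$ is a polar‑cone‑type set whose dependence on $\theta$ is controlled by the Hausdorff continuity of $N(\theta)$ (equivalently, by the joint continuity of its support function established above). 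This provides the requisite two‑sided stability of the minimizers and hence the uniform convergence of the difference quotients; the remaining assembly through the delta method is then routine.
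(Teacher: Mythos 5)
Your core route coincides with the paper's: view $\hat T(\theta)=\sqrt n\bigl(\Phi(\hat\psi)(\theta)-\Phi(\psi)(\theta)\bigr)$ for the min functional $\Phi$, identify its Hadamard directional derivative as $h\mapsto\min_{u\in U(\theta)}h(u,\theta)$ via a Danskin-type argument, and push \Cref{prop:uclt} through the delta method for directionally differentiable maps. The only structural difference is that the paper outsources the differentiability step to Theorem 3.1 of \citet{shapiro1991asymptotic} and then cites Theorem 3.10.5 of \citet{vaart2023empirical}, whereas you prove the Danskin step by hand; for a \emph{fixed} $\theta$ your upper and lower bounds on the difference quotient are correct and essentially reproduce Shapiro's proof.

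Where you go beyond the paper is in trying to make the Danskin step uniform over $\theta\in\Theta_I$, i.e.\ to differentiate $\Phi$ as a map into $\ell^\infty(\Theta_I)$. You correctly flag the obstacle --- $U(\cdot)$ is only upper hemicontinuous, so $\theta\mapsto\min_{u\in U(\theta)}h(u,\theta)$ need not be upper semicontinuous and the uniform lower bound can fail --- but your proposed fix does not close it. Hausdorff continuity of $\theta\mapsto N(\theta)$ does \emph{not} transfer to the sets $U(\theta)=\{u\in\B:\ u'\nu\le 0\ \forall\,\nu\in N(\theta)\}$, because the polar-cone operation is discontinuous when $N(\theta)$ degenerates: if, say, $N(\theta_n)$ is a vanishing neighborhood of the origin while $N(\theta_0)=\{0\}$, then $U(\theta_n)=\{0\}$ for all $n$ but $U(\theta_0)=\B$, so $U(\cdot)$ is not lower hemicontinuous and $\min_{u\in U(\theta_n)}h(u,\theta_n)$ can stay bounded away from (above) $\min_{u\in U(\theta_0)}h(u,\theta_0)$. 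Nothing in \Cref{a:moment-and-support} or the hypotheses of \Cref{prop:uclt} rules this out (the sharp-minima condition, \Cref{as:sharp-minima}, is not assumed here, and even it is imposed only at a single $\theta_0$), so the asserted ``two-sided stability of the minimizers'' is an additional assumption rather than a consequence of the structure. To be fair, the paper's own two-line proof has the same lacuna --- Shapiro's theorem is pointwise in $\theta$ and does not by itself deliver the $\ell^\infty(\Theta_I)$ statement --- so your write-up is more explicit about where the uniformity must come from; but as it stands the lower-bound half of your uniform Danskin step is a genuine gap, not a proof.
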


\begin{proof}
    See \Cref{app:lim-distribution}.
\end{proof}

\begin{remark}
    Using this uniform convergence result, it should be possible to develop a specification test based on the statistic $\max_{\theta\in\Theta} \hat T(\theta)$, the population analog of which is zero if and only if the identified set $\Theta_I$ is nonempty. 
    The critical values can be obtained using the bootstrap of \citet{fang2019inference}, in analogy to our treatment of the test statistic $\hat T(\theta_0)$ for a fixed $\theta_0$ below.
    Similar specification tests have been developed for moment inequality models, see, e.g., \citet{bugni2015specification} and references therein.
\end{remark}

Since the asymptotic distribution of $\hat T(\theta)$ is neither available in closed form nor is easy to simulate from, we need an alternative procedure for obtaining the critical values.
The results of \citet{fang2019inference} imply that, unless the Hadamard derivative $\min_{u \in U(\theta)} \G(u,\theta)$ is a linear functional of $\G$, the standard nonparametric bootstrap is invalid.

\setcounter{example}{0}
\begin{example}[continued]
    In the univariate case, the sample criterion function is 
\begin{align*}
    \hat Q(\theta) = \min_{u\in \B} \hat \psi_{N(\theta)}(u) = \min\left\{0,\,\, \hat\E[f_{\max}(W,\theta)], \,\,- \hat \E[f_{\min}(W,\theta)] \right\}.
\end{align*}
Consider a data generating process and a parameter value $\theta$ satisfying $\E[f_{\min}(W,\theta)] < 0$, making the sample moment $\hat \E[f_{\min}(W,\theta)]$ irrelevant asymptotically.
We have
\begin{align*}
    Q(\theta) &= \min\left\{0,\,\, \E[f_{\max}(W,\theta)] \right\}, \\
    \hat Q(\theta) &\overset{asy}{\sim} \min\left\{0,\,\, \hat\E[f_{\max}(W,\theta)] \right\}.
\end{align*}
This is known to be an irregular problem when $\E[f_{\max}(W,\theta)]=0$, which corresponds to $\theta$ being on the boundary of the identified set.
In particular, the standard nonparametric bootstrap is invalid for the asymptotic distribution $\min(0,N(0,1))$ of $\sqrt{n} \hat Q(\theta)$, see, e.g., \citet{andrews2000inconsistency}.
\end{example}

We show how to leverage the bootstrap for directionally differentiable functionals of \citet{fang2019inference}.
To this end, we propose an estimator of the derivative that satisfies their key Assumption 4.

For $\varepsilon_n>0$, define 
\begin{align}
    \hat\chi'(h) = \min_{u\in \hat U(\varepsilon_n)} h(u), \label{eq:hdd-estimator}
\end{align}
where
\begin{align*}
    \hat U(\varepsilon_n) = \left\{u\in \B:\,\,\hat\psi_{N(\theta_0)}(u) \le \min_{v\in\B} \hat\psi_{N(\theta_0)}(v)+\varepsilon_n \right\}
\end{align*}
is the ``$\varepsilon_n$-enlarged'' argmin set of $\hat\psi_{N(\theta_0)}$.
We impose the following assumptions.

\begin{assumption}[sharp minima of $\psi_{N(\theta_0)}$]\label{as:sharp-minima}
    There exists $\kappa>0$ such that 
    \begin{align*}
        \psi_{N(\theta_0)}(u) \ge \min_{v\in \B}\psi_{N(\theta_0)}(v) + \kappa \cdot d(u,U(\theta_0)) \text{ for all } u \in \B.
    \end{align*}
\end{assumption}

\begin{assumption}[bandwidth]\label{as:bandwidth} The sequence $\varepsilon_n$ satisfies $\varepsilon_n \downarrow 0$ and $\|\hat\psi_{N(\theta_0)}-\psi_{N(\theta_0)}\|_\B = o_p(\varepsilon_n)$.
\end{assumption}

\Cref{as:sharp-minima} posits that $\psi_{N(\theta_0)}$ grows sufficiently fast away from its argmin set $U(\theta_0)$.
The latter does not have to be a singleton, as illustrated in \Cref{example:univariate} for which \Cref{as:sharp-minima} holds.
This assumption is equivalent to every subgradient $\nabla\psi_{N(\theta_0)}(u)$ being bounded away from zero for $u\notin U(\theta_0)$.
It also suffices that it holds in a small neighborhood around $U(\theta_0)$ rather than on the entire set $\B$.
\Cref{as:bandwidth} restricts how fast $\varepsilon_n$ should converge to zero.
Since $\|\hat\psi_{N(\theta_0)}-\psi_{N(\theta_0)}\|_\B=O_p(1/\sqrt{n})$ by \Cref{prop:uclt}, we can take $\varepsilon_n \sim \log n/\sqrt{n}$.

Given our estimator \eqref{eq:hdd-estimator}, the \citet{fang2019inference} bootstrap algorithm for testing $H_0:\theta=\theta_0$ at the nominal size $\alpha\in (0,1)$ is as follows.
\begin{enumerate}
    \item Compute $\hat T(\theta_0)=\sqrt{n} \min_{u\in\B} \hat \psi_{N(\theta_0)}(u)$.
    \item Define $\hat U_n = \left\{u\in \B:\,\, \hat \psi_{N(\theta_0)}(u) \le \min_{v\in\B} \hat \psi_{N(\theta_0)}(v) + \varepsilon_n \right\}$.
    \item For each $b=1,\dots, B$:
    \begin{enumerate}
      \item Draw a random sample $W_i^{b}$, $i=1,\dots, n$, with replacement from the data $W_i = (S_i,Z_{1i},S_i Z_{2i})$, $i=1,\dots, n$.
      \item Compute $\hat \psi_{N(\theta_0)}^{*b}(u)$ on the sample $W_i^b$, $i=1,\dots,n$.
      \item Compute
      \begin{align*}
        \hat T^{*b} = \min_{u\in\hat U_n}\sqrt{n}\left( \hat \psi_{N(\theta_0)}^{*b}(u) - \hat \psi_{N(\theta_0)}(u) \right).
      \end{align*}
    \end{enumerate}
  \item Denote by $\hat c_{\alpha}^*$ the $\alpha$-quantile of $\hat T^{*1},\dots, \hat T^{*B}$.
  \item Reject $H_0$ if $\hat T(\theta_0) < \hat c_{\alpha}^*$.
  \end{enumerate}

The following theorem establishes that this bootstrap-based testing procedure controls size for a fixed data generating process.

\begin{theorem}[test validity, pointwise]\label{thm:bootstrap}
    Suppose \Cref{as:sharp-minima,as:bandwidth} hold and consider the testing procedure above with $B=\infty$. Then for every fixed data generating process, the asymptotic probability of rejecting the true null hypothesis is less than or equal to $\alpha$.
\end{theorem}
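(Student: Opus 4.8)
The plan is to recognize $\hat T(\theta_0)$ as the ``min'' functional applied to the empirical support function and to invoke the bootstrap for directionally differentiable functionals of \citet{fang2019inference}. Let $g_u(W) = S\, u'\phi(Z_1,Z_2,\theta_0) + (1-S)\max_{z_2\in\mathcal{Z}_2} u'\phi(Z_1,z_2,\theta_0)$, so that by \Cref{prop:reduction} the estimator $\hat\psi_{N(\theta_0)}(u)$ is the sample mean of $g_u(W_i)$ while $\psi_{N(\theta_0)}(u) = \E[g_u(W)]$. Consider the map $\chi\colon C(\B)\to\R$, $\chi(\xi) = \min_{u\in\B}\xi(u)$, so that $\hat T(\theta_0)=\sqrt n\,\chi(\hat\psi_{N(\theta_0)})$; since $\theta_0\in\Theta_I$ under $H_0$, \Cref{prop:id-set} gives $\chi(\psi_{N(\theta_0)})=Q(\theta_0)=0$, and hence $\hat T(\theta_0)=\sqrt n\bigl(\chi(\hat\psi_{N(\theta_0)})-\chi(\psi_{N(\theta_0)})\bigr)$. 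The functional $\chi$ is Hadamard directionally differentiable at $\psi_{N(\theta_0)}$ tangentially to $C(\B)$ with derivative $h\mapsto \min_{u\in U(\theta_0)}h(u)$ --- a standard property of optimal-value functions (Danskin's theorem, and one of the running examples in \citet{fang2019inference}) --- which does not require $U(\theta_0)$ to be a singleton.

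Next I would assemble the ingredients of the delta-method bootstrap theorem of \citet{fang2019inference}. \Cref{prop:uclt}, specialized to $\theta=\theta_0$, gives $\sqrt n(\hat\psi_{N(\theta_0)}-\psi_{N(\theta_0)})\weakto\G_0$ in $C(\B)$, where $\G_0:=\G(\cdot,\theta_0)$ is a tight Gaussian element. The class $\{g_u:u\in\B\}$ is $P$-Donsker (this is established within the proof of \Cref{prop:uclt}), so by standard empirical process theory the nonparametric bootstrap is consistent for the law of $\G_0$: conditionally on the data, $\sqrt n(\hat\psi^{*}_{N(\theta_0)}-\hat\psi_{N(\theta_0)})\weakto\G_0$ in probability, where $\hat\psi^{*}_{N(\theta_0)}$ is the support-function estimate on a bootstrap draw as in the algorithm.

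The remaining and most delicate ingredient is that $\hat\chi'(h)=\min_{u\in\hat U(\varepsilon_n)}h(u)$ consistently estimates the Hadamard derivative in the sense of Assumption 4 of \citet{fang2019inference}. It is $1$-Lipschitz in $h$, hence continuous and measurable, and for any compact $K\subset C(\B)$, whose elements are uniformly equicontinuous with common modulus $\omega_K$, one has $\sup_{h\in K}\bigl|\min_{u\in\hat U(\varepsilon_n)}h(u)-\min_{u\in U(\theta_0)}h(u)\bigr|\le\omega_K\bigl(d_H(\hat U(\varepsilon_n),U(\theta_0))\bigr)$, so it suffices to show $d_H(\hat U(\varepsilon_n),U(\theta_0))=o_p(1)$, which is exactly what \Cref{as:sharp-minima,as:bandwidth} are designed to deliver. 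On one hand, $\|\hat\psi_{N(\theta_0)}-\psi_{N(\theta_0)}\|_\B=o_p(\varepsilon_n)$ implies $U(\theta_0)\subseteq\hat U(\varepsilon_n)$ with probability approaching one; on the other hand, any $u\in\hat U(\varepsilon_n)$ satisfies $\psi_{N(\theta_0)}(u)\le\min_{v\in\B}\psi_{N(\theta_0)}(v)+\varepsilon_n+2\|\hat\psi_{N(\theta_0)}-\psi_{N(\theta_0)}\|_\B$, so \Cref{as:sharp-minima} forces $d(u,U(\theta_0))\le\kappa^{-1}\bigl(\varepsilon_n+2\|\hat\psi_{N(\theta_0)}-\psi_{N(\theta_0)}\|_\B\bigr)=o_p(1)$; combining the two bounds yields the Hausdorff consistency. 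I expect this step to be the main obstacle, as it is the non-automatic part of Assumption 4 of \citet{fang2019inference} and relies essentially on the new sharp-minima condition.

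With these ingredients, the bootstrap consistency theorem of \citet{fang2019inference} gives that, conditionally on the data and in probability, $\hat T^{*}=\hat\chi'\bigl(\sqrt n(\hat\psi^{*}_{N(\theta_0)}-\hat\psi_{N(\theta_0)})\bigr)\weakto\min_{u\in U(\theta_0)}\G_0(u)=:L$, which by \Cref{prop:lim-distribution} is precisely the limit law of $\hat T(\theta_0)$ when $\theta_0\in\Theta_I$. Under $H_0$ the true parameter lies in $\Theta_I$, so the size-control conclusion of \citet{fang2019inference} --- which uses $\hat T^{*}\weakto L$ conditionally, $\hat T(\theta_0)\weakto L$, the construction of $\hat c^{*}_\alpha$ as the bootstrap $\alpha$-quantile, and the elementary bound $P(L<q_\alpha(L))\le\alpha$, and which accommodates discontinuities of the distribution function of $L$ at $q_\alpha(L)$ (in particular the degenerate case $L\equiv 0$ arising when $\theta_0$ is interior to $\Theta_I$, where $U(\theta_0)=\{0\}$ and $\G_0(0)=0$ a.s.) --- yields $\limsup_n P(\hat T(\theta_0)<\hat c^{*}_\alpha)\le\alpha$, as claimed.
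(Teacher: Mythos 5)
Your proposal is correct and follows essentially the same route as the paper's proof: it reduces the problem to verifying Assumption 4 of \citet{fang2019inference} via the $1$-Lipschitz property of $\hat\chi'$ together with Hausdorff consistency $d_H(\hat U(\varepsilon_n),U(\theta_0))=o_p(1)$, and your two-sided argument for the latter (the inclusion $U(\theta_0)\subseteq\hat U(\varepsilon_n)$ w.p.a.\ 1 and the bound $d(u,U(\theta_0))\le\kappa^{-1}(\varepsilon_n+2\|\hat\psi_{N(\theta_0)}-\psi_{N(\theta_0)}\|_\B)$ from the sharp-minima condition) is exactly the content of the paper's \Cref{lem:hausdorff-convergence}, with your modulus-of-continuity step playing the role of \Cref{lem:min-convergence}. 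The only cosmetic difference is that you verify the compact-uniform version of derivative consistency directly via equicontinuity, whereas the paper verifies pointwise consistency and invokes Remark 3.4 of \citet{fang2019inference} to upgrade it using the Lipschitz property; these are equivalent.
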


\begin{proof}
    See \Cref{app:bootstrap}.
\end{proof}

Although the previous theorem establishes that our test asymptotically controls size for a fixed data generating process, it is silent about uniform size control.
In other words, it could be possible that even for a large sample size, one can find a particularly unfavorable DGP that would lead our test to exhibit significant size distortion.
Such lack of uniform size control has been extensively studied in the context of moment inequality models, e.g., \citet{imbens2004confidence,romano2008inference,andrews2009validity,moon2009estimation,woutersen2006simple}.

Fortunately, we can rely on the abstract results of \citet{fang2019inference} to establish local size control.
Denote by $P$ the joint distribution of the data vector $W=(S,Z_1',S Z_2')'$ and let $\psi(P)$ be the corresponding population support function of $N(\theta_0)$, where $\theta_0$ is any value in the identified set $\Theta_I = \Theta_I(P)$. 
We consider a sequence of DGPs $P_t$, $t \ge 0$, that is local to a given DGP $P_0$ in the following sense, cf. Assumption 5 in \citet{fang2019inference}.

\begin{assumption}[local sequence of DGPs]\label{as:local-analysis}
    The path $t \mapsto P_t$ is quadratic mean differentiable and satisfies for any $\lambda >0$ the following.
    \begin{subassumption}
    \item \label{suba:psi-diff} There exists $\psi'(\lambda)$ such that
    \[
        \|\sqrt{n}(\psi(P_{\lambda/\sqrt{n}})-\psi(P_0)) - \psi'(\lambda)\|_\B \to 0.
    \]
    \item \label{suba:psi-regular} $\sqrt{n} (\hat\psi_n - \psi(P_{\lambda/\sqrt{n}})) \overset{\lambda}{\to} \G_0$, where $\overset{\lambda}{\to}$ denotes convergence in distribution under $\{W_i\}_{i=1}^n$ i.i.d., with each $W_i$ distributed as $P_{\lambda/\sqrt{n}}$, and $\hat\psi_n$ is our estimator of $\psi(P_{\lambda/\sqrt{n}})$ based on $\{W_i\}_{i=1}^n$.
    \item \label{suba:local-distribution} $\G_0$ is tight and supported on $C(\B)$.
    \end{subassumption}
\end{assumption}

\Cref{suba:psi-diff} imposes differentiability of the parameter $\psi(P)$ along the path $P_{\lambda/\sqrt{n}}$.
It is expected to hold for any quadratic mean differentiable path in our model under mild regularity conditions.
\Cref{suba:psi-regular} requires that the distribution of the estimator $\hat\psi_n$ is unaffected by local perturbations of the DGP, i.e. that $\hat\psi_n$ is \emph{regular}.
Finally, \Cref{suba:local-distribution} is expected to hold due to \Cref{prop:uclt}.

\begin{theorem}[test validity, locally uniform]\label{thm:uniform-test-validity}
    Suppose \Cref{as:sharp-minima,as:bandwidth,as:local-analysis} hold and consider the testing procedure above with $B=\infty$. Then this procedure controls size locally uniformly in the sense that
    \begin{align*}
        \limsup_{n \to \infty} P_{\lambda/\sqrt{n}}\left(\hat T_n(\theta_0) < \hat c_{\alpha}^* \right) \le \alpha.
    \end{align*}
\end{theorem}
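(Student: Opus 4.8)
The plan is to recognize the test as an instance of the bootstrap for Hadamard directionally differentiable functionals of \citet{fang2019inference} and to run their abstract local-size analysis. Write $\chi:C(\B)\to\R$ for the functional $\chi(g)=\min_{u\in\B}g(u)$, so that $\hat T_n(\theta_0)=\sqrt n\,\chi(\hat\psi_{N(\theta_0)})$; since $\theta_0\in\Theta_I$ under the null, $\chi(\psi(P_0))=Q(\theta_0)=0$ with argmin set $U_0:=U(\theta_0)=\argmin_{u\in\B}\psi(P_0)(u)$. The functional $\chi$ is $1$-Lipschitz for the sup-norm and fully Hadamard directionally differentiable at every $g\in C(\B)$, with derivative $\chi'_g(h)=\min_{u\in U_g}h(u)$, $U_g=\argmin_{u\in\B}g(u)$ (a Danskin-type fact; see also \citet{fang2019inference}). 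The proof then reduces to checking, along the local path $P_{\lambda/\sqrt n}$, the hypotheses of the \citet{fang2019inference} machinery: (i) the uniform CLT $\sqrt n(\hat\psi_{N(\theta_0)}-\psi(P_{\lambda/\sqrt n}))\overset{\lambda}{\to}\G_0$ with $\G_0$ tight in $C(\B)$, which is \Cref{suba:psi-regular,suba:local-distribution}; (ii) consistency of the nonparametric bootstrap for $\G_0$ under $P_{\lambda/\sqrt n}$; (iii) that $\hat\chi'(h)=\min_{u\in\hat U(\varepsilon_n)}h(u)$ estimates the derivative in the sense of their Assumption~4; and (iv) the local-DGP regularity, which is precisely \Cref{as:local-analysis}. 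Items (i)--(iii) are the fixed-DGP content already used to establish \Cref{thm:bootstrap}, now transported to the contiguous path.

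The substantive step is (iii). Its engine is that \Cref{as:sharp-minima} converts sup-norm proximity of $\hat\psi_{N(\theta_0)}$ to $\psi(P_0)$ into Hausdorff proximity of the enlarged argmin sets: any $u\in\hat U(\varepsilon_n)$ satisfies $\psi(P_0)(u)\le\min_{v\in\B}\psi(P_0)(v)+\varepsilon_n+2\|\hat\psi_{N(\theta_0)}-\psi(P_0)\|_\B$, and sharpness bounds $d(u,U_0)$ by $\kappa^{-1}$ times that slack, while $U_0\subseteq\hat U(\varepsilon_n)$ once the slack exceeds $2\|\hat\psi_{N(\theta_0)}-\psi(P_0)\|_\B$; hence $d_H(\hat U(\varepsilon_n),U_0)\le\kappa^{-1}(\varepsilon_n+2\|\hat\psi_{N(\theta_0)}-\psi(P_0)\|_\B)$. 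Along the local path, \Cref{suba:psi-diff} gives $\|\psi(P_{\lambda/\sqrt n})-\psi(P_0)\|_\B=o_p(\varepsilon_n)$ (indeed $O(n^{-1/2})$) and \Cref{suba:psi-regular} gives $\|\hat\psi_{N(\theta_0)}-\psi(P_{\lambda/\sqrt n})\|_\B=O_p(n^{-1/2})=o_p(\varepsilon_n)$ for $\varepsilon_n\sim\log n/\sqrt n$ as permitted by \Cref{as:bandwidth}, so $d_H(\hat U(\varepsilon_n),U_0)\overset{p}{\to}0$; combined with uniform continuity of the relevant directions this yields $\hat\chi'(h_n)=\min_{u\in\hat U(\varepsilon_n)}h_n(u)\to\min_{u\in U_0}h(u)=\chi'_{\psi(P_0)}(h)$ in probability under $P_{\lambda/\sqrt n}$ for $h_n\to h$, i.e. Assumption~4. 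For (ii), the class $\{w\mapsto s\,u'\phi(z_1,z_2,\theta_0)+(1-s)\max_{z_2'\in\mathcal Z_2}u'\phi(z_1,z_2',\theta_0):u\in\B\}$ is $P_0$-Donsker (from the proof of \Cref{prop:uclt}), so the nonparametric bootstrap is consistent at $P_0$, and by the regularity in \Cref{as:local-analysis} together with contiguity this persists along $P_{\lambda/\sqrt n}$: $\G_n^{*}:=\sqrt n(\hat\psi^{*}_{N(\theta_0)}-\hat\psi_{N(\theta_0)})\to\G_0$ conditionally, in probability.

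With (i)--(iv) in hand, two conclusions follow along the path. First, the \citet{fang2019inference} composition theorem gives that the conditional law of $\hat T^{*}=\hat\chi'(\G_n^{*})=\min_{u\in\hat U(\varepsilon_n)}\G_n^{*}(u)$ converges in probability to the law of $\chi'_{\psi(P_0)}(\G_0)=\min_{u\in U_0}\G_0(u)$, so $\hat c_\alpha^{*}\overset{p}{\to}c_\alpha$, the $\alpha$-quantile of $\min_{u\in U_0}\G_0(u)$. Second, since $\sqrt n(\hat\psi_{N(\theta_0)}-\psi(P_0))=\sqrt n(\hat\psi_{N(\theta_0)}-\psi(P_{\lambda/\sqrt n}))+\sqrt n(\psi(P_{\lambda/\sqrt n})-\psi(P_0))\overset{\lambda}{\to}\G_0+\psi'(\lambda)$ by \Cref{suba:psi-regular,suba:psi-diff}, the delta method for directionally differentiable maps gives $\hat T_n(\theta_0)=\sqrt n\,\chi(\hat\psi_{N(\theta_0)})\overset{\lambda}{\to}\chi'_{\psi(P_0)}(\G_0+\psi'(\lambda))=\min_{u\in U_0}\{\G_0(u)+\psi'(\lambda)(u)\}$. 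The null being maintained along the path ($\theta_0\in\Theta_I(P_{\lambda/\sqrt n})$, as in Assumption~5 of \citet{fang2019inference}), for every $u\in U_0$ we have $\psi(P_0)(u)=0$ and $\psi(P_{\lambda/\sqrt n})(u)\ge\min_{v\in\B}\psi(P_{\lambda/\sqrt n})(v)=0$, whence $\psi'(\lambda)(u)\ge0$. Therefore $\min_{u\in U_0}\{\G_0(u)+\psi'(\lambda)(u)\}\ge\min_{u\in U_0}\G_0(u)$ almost surely, and combining with $\hat c_\alpha^{*}\overset{p}{\to}c_\alpha$ and Portmanteau,
\begin{align*}
\limsup_{n\to\infty}P_{\lambda/\sqrt n}\!\left(\hat T_n(\theta_0)<\hat c_\alpha^{*}\right)
&\le P\!\left(\min_{u\in U_0}\{\G_0(u)+\psi'(\lambda)(u)\}\le c_\alpha\right)\\
&\le P\!\left(\min_{u\in U_0}\G_0(u)\le c_\alpha\right)\;\le\;\alpha ,
\end{align*}
the last inequality using the definition of $c_\alpha$ and continuity of the law of $\min_{u\in U_0}\G_0(u)$ at $c_\alpha$ (e.g. under non-degeneracy of $\G_0$ on $U_0$), as in \citet{fang2019inference}.

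The main obstacle is items (ii)--(iii) \emph{along the local sequence} rather than at the fixed $P_0$: one must genuinely invoke contiguity and the regularity in \Cref{suba:psi-regular} to carry both the conditional bootstrap CLT and the Hausdorff consistency $\hat U(\varepsilon_n)\overset{p}{\to}U_0$ over from the fixed DGP to the perturbed one, which is exactly where the abstract apparatus of \citet{fang2019inference} does the real work. A secondary technicality is the bookkeeping around $\hat c_\alpha^{*}\overset{p}{\to}c_\alpha$ and the final Portmanteau step when the limiting distribution has an atom at, or a flat distribution function near, $c_\alpha$; this is handled by the quantile convention of \citet{fang2019inference} together with the continuity remark above.
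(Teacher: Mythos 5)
Your proof is correct, but it takes a genuinely different route from the paper's for the decisive step. The paper's proof is very short: it flips the sign to match Section 3.4.2 of \citet{fang2019inference}, identifies the Hadamard directional derivative $\chi'(h) = -\min_{u\in U}h(u)$, verifies that this derivative is a \emph{convex} functional of $h$, and then cites their Corollary 3.2, which packages the entire local analysis (contiguous limit of the statistic, bootstrap quantile convergence, and the size comparison) under that convexity hypothesis. You never invoke convexity; instead you unpack the machinery, re-verify their Assumption 4 along the path via the Hausdorff bound $d_H(\hat U(\varepsilon_n),U(\theta_0)) \le \kappa^{-1}\bigl(\varepsilon_n + 2\|\hat\psi_{N(\theta_0)}-\psi(P_0)\|_\B\bigr)$, derive the local limit $\min_{u\in U(\theta_0)}\{\G_0(u)+\psi'(\lambda)(u)\}$ directly, and obtain the size bound from the pointwise inequality $\psi'(\lambda)(u)\ge 0$ on $U(\theta_0)$ when the null is maintained along the path. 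The two mechanisms are close cousins: convexity plus positive homogeneity of the derivative gives $\chi'(\G_0+\psi'(\lambda)) \le \chi'(\G_0)+\chi'(\psi'(\lambda))$ with $\chi'(\psi'(\lambda))\le 0$, which is the abstract form of your stochastic-dominance step. Your version is more transparent about \emph{why} size is controlled and makes explicit the local verification of the bootstrap and derivative-estimator conditions that the paper delegates to \citet{fang2019inference}; the paper's version buys brevity and inherits their careful handling of the quantile and atom technicalities that you must flag by hand at the end. Two points you should make explicit to be fully rigorous: the paths must be restricted to those satisfying the null, $\theta_0\in\Theta_I(P_{\lambda/\sqrt n})$ (the paper inherits this restriction from the statement of Corollary 3.2), and the final Portmanteau/quantile step needs the continuity condition at $c_\alpha$ that you mention, which degenerates when $U(\theta_0)=\{0\}$.
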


\begin{proof}
    See \Cref{app:uniform-test-validity}.
\end{proof}

\section{Extension to more general missingness patterns}\label{sec:multi-period}

Our methodology can be extended to more general missingness patterns, as long as missingness is ``monotone'' (see below) and data contains no information on the distribution of the missing variables (i.e., missingness is ``fully unrestricted'').
We will illustrate this extension using the setup of panel data with more than two periods, where attrition happens monotonically, i.e., once the units drop out of the sample, they never return.
For expositional simplicity, we consider the case of $T=3$ periods.

Denote the data in the three periods by $Z_1,Z_2,Z_3$ and denote the indicators of staying in the sample in periods 2 and 3 by $S_2$ and $S_3$, respectively.
Since there is no return to the sample, $S_2=0$ implies $S_3=0$.
Also, let
\[
p_2=P(S_2=1), \quad p_3=P(S_3=1), \quad p_{2|0} = p(S_2=1|S_3=0).
\]
The parameter of interest $\theta$ is defined by the moment conditions
\[
\E_\pi [\phi(Z_1,Z_2,Z_3,\theta)]=0,
\]
where the expectation is taken with respect to the latent, partially identified distribution $\pi$ of $(Z_1,Z_2,Z_3)$.

\begin{proposition}\label{prop:multi-period}
The identified set is $\Theta_I = \{\theta\in\Theta: \,\, \min_{u\in\B} \psi_\theta(u)=0 \}$, where
\begin{align*}
    \psi_\theta(u) &= p_3\E[\phi(Z_1,Z_2,Z_3,\theta)|S_3=1] + (1-p_3)\left[p_{2|0} \E [\max_{z_3} u'\phi(Z_1,Z_2,z_3,\theta) |S_2=1,S_3=0] \right. \\
    &+ \left. (1-p_{2|0}) \E [\max_{z_2,z_3} u'\phi(Z_1,z_2,z_3,\theta) |S_2=S_3=0]
    \right].
\end{align*}
\end{proposition}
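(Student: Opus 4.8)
The plan is to reproduce the logic behind \Cref{prop:id-set,prop:reduction} in the richer three-period setting, the only genuinely new ingredient being the bookkeeping of the monotone attrition pattern. Monotonicity ($S_2=0\Rightarrow S_3=0$) leaves only three selection cells, $(S_2,S_3)\in\{(1,1),(1,0),(0,0)\}$, so the latent law $\pi$ of $(Z_1,Z_2,Z_3)$ decomposes as
\[
\pi = p_3\,\pi^{(1,1)} + (1-p_3)p_{2|0}\,\pi^{(1,0)} + (1-p_3)(1-p_{2|0})\,\pi^{(0,0)},
\]
where $\pi^{(1,1)}$, the law given $S_3=1$, is point identified; $\pi^{(1,0)}$, the law given $S_2=1,S_3=0$, has point-identified $(Z_1,Z_2)$-marginal but an unrestricted conditional law of $Z_3$ given $(Z_1,Z_2)$; and $\pi^{(0,0)}$, the law given $S_2=0$, has point-identified $Z_1$-marginal but an unrestricted conditional law of $(Z_2,Z_3)$ given $Z_1$. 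The weights are identified because $P(S_3=1)=p_3$, $P(S_2=1,S_3=0)=(1-p_3)p_{2|0}$, and $P(S_2=0)=(1-p_3)(1-p_{2|0})$. The first step is to show that, under ``monotone'' and ``fully unrestricted'' missingness, the sharp identified set for $\pi$ is exactly $\Pi=\{p_3\pi^{(1,1)}+(1-p_3)p_{2|0}\pi^{(1,0)}+(1-p_3)(1-p_{2|0})\pi^{(0,0)}:\pi^{(1,0)}\in\Pi^{(1,0)},\ \pi^{(0,0)}\in\Pi^{(0,0)}\}$, where $\Pi^{(1,0)},\Pi^{(0,0)}\subset\mathcal P(\mathcal Z_1\times\mathcal Z_2\times\mathcal Z_3)$ are the sets of laws with the prescribed marginals. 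Observational equivalence (``$\subseteq$'') is immediate since any such $\pi$ induces the correct law of the observed vector; the reverse inclusion is the sharpness claim.

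Given this, set $N(\theta)=\{\E_\pi[\phi(Z_1,Z_2,Z_3,\theta)]:\pi\in\Pi\}$. Exactly as in the argument establishing convexity and compactness of $N(\theta)$ in \Cref{app:id-set}, $N(\theta)$ is convex and compact: convexity is inherited from that of $\Pi$ (a mixture of laws with a fixed marginal has the same marginal, and $\pi\mapsto\E_\pi[\phi]$ is linear). By the supporting-hyperplane argument of \Cref{prop:id-set}, $\theta\in\Theta_I$ iff $0\in N(\theta)$ iff $\psi_{N(\theta)}(u)\ge 0$ for all $u\in\B$ iff $\min_{u\in\B}\psi_{N(\theta)}(u)=0$, the last equivalence using $\psi_{N(\theta)}(0)=0$. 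It remains to compute $\psi_{N(\theta)}(u)=\max_{\pi\in\Pi}u'\E_\pi[\phi]$. Linearity of the expectation in $\pi$, together with the fact that $\pi^{(1,0)}$ and $\pi^{(0,0)}$ range over the \emph{product} set $\Pi^{(1,0)}\times\Pi^{(0,0)}$ with no joint constraint, lets the maximization split into the point-identified term $p_3\,\E[u'\phi\mid S_3=1]$, plus $(1-p_3)p_{2|0}\max_{\pi^{(1,0)}\in\Pi^{(1,0)}}\E_{\pi^{(1,0)}}[u'\phi]$, plus $(1-p_3)(1-p_{2|0})\max_{\pi^{(0,0)}\in\Pi^{(0,0)}}\E_{\pi^{(0,0)}}[u'\phi]$.

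Finally, each inner maximization is handled by the reduction formula of \Cref{prop:reduction}, applied with the conditioning variable being $(Z_1,Z_2)$ in the first case and $Z_1$ in the second. The proof of \Cref{prop:reduction} in \Cref{app:reduction} is insensitive to the dimension of the variable whose marginal is held fixed, so it gives $\max_{\pi^{(1,0)}\in\Pi^{(1,0)}}\E_{\pi^{(1,0)}}[u'\phi]=\E[\max_{z_3\in\mathcal Z_3}u'\phi(Z_1,Z_2,z_3,\theta)\mid S_2=1,S_3=0]$ and $\max_{\pi^{(0,0)}\in\Pi^{(0,0)}}\E_{\pi^{(0,0)}}[u'\phi]=\E[\max_{(z_2,z_3)\in\mathcal Z_2\times\mathcal Z_3}u'\phi(Z_1,z_2,z_3,\theta)\mid S_2=S_3=0]$, where one invokes Berge's theorem and a measurable-selection argument (as in \Cref{prop:reduction}) so that the pointwise maxima are measurable and the supremum over $\Pi^{(1,0)},\Pi^{(0,0)}$ is attained on the closed set $N(\theta)$ despite Dirac conditionals failing to be $\mu$-absolutely continuous. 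Substituting these into the display for $\psi_{N(\theta)}(u)$ yields $\psi_{N(\theta)}(u)=\psi_\theta(u)$, and the stated characterization of $\Theta_I$ follows.

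The step I expect to be the main obstacle is the sharpness of $\Pi$: showing that every $\pi$ observationally equivalent to the true law has the stated mixture form with free conditionals in $\Pi^{(1,0)}$ and $\Pi^{(0,0)}$. This is where the ``monotone'' and ``fully unrestricted'' qualifications carry the argument: monotonicity collapses the selection lattice to three cells, so that only the $(Z_1,Z_2)$- and $Z_1$-marginals are pinned down in the two censored cells, while the no-information assumption is precisely what licenses letting the remaining conditionals vary freely subject only to \Cref{a:mu}. A secondary technical point, already present in \Cref{prop:reduction}, is the sup-versus-max subtlety induced by $\mu$: when $\mu$ has a continuous component in the censored coordinates, the maximizing conditional is a point mass outside $\mathcal P(\mathcal Z_1\times\mathcal Z_2\times\mathcal Z_3)$, so the equality must be obtained via approximating sequences together with the closedness of $N(\theta)$.
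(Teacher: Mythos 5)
Your proposal is correct and takes essentially the same route as the paper's own (very terse) proof: decompose the moment by the three selection cells that monotone attrition permits, note the $(1,1)$ cell is point identified, and apply the reduction formula of \Cref{prop:reduction} cell by cell with the conditioning variable being $(Z_1,Z_2)$ or $Z_1$ as appropriate. You simply make explicit the steps the paper compresses into ``arguing as in the proof of \Cref{prop:reduction}'' --- the sharpness of the mixture representation, the product structure of the free conditionals, and the support-function characterization from \Cref{prop:id-set}.
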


\begin{proof}
    See \Cref{app:multi-period}.
\end{proof}

\section{Monte Carlo simulation}\label{sec:simulation}

We consider a simple cross-sectional linear regression
\begin{align*}
    Y_i = \theta_0 + \theta_1 X_i + \varepsilon_i, \quad i=1,\dots,n,
\end{align*}
where $Y_i$ is observed for the entire sample, whereas $X_i$ is missing for observations $i$ with $S_i=0$, where $S_i \in \{0,1\}$ is the sample selection indicator.
The parameter of interest is $\theta=(\theta_0,\theta_1)$.
We assume that the support of $X_i$ is known, in agreement with \Cref{a:mu}.
Dropping the subscript $i$, the moment conditions are
\begin{align*}
    \E(Y-\theta_0-\theta_1 X)&=0, \\
    \E[X(Y-\theta_0-\theta_1 X)]&=0,
\end{align*}
and hence the support function is
\begin{align*}
    \psi_\theta(u_1,u_2) &= p \E[u_1 (Y-\theta_0-\theta_1 X) + u_2 X(Y-\theta_0-\theta_1 X) \,|\,S=1  ] \\
    &+ (1-p)\E\left[\max_{x\in[0,1]}\left\{ u_1 (Y-\theta_0-\theta_1 x) + u_2 x(Y-\theta_0-\theta_1 x) \right\}\right].
\end{align*}
Notice that the inner maximization is a constrained quadratic problem.
We assume that $X_i$ has a uniform[0,1] distribution and $\varepsilon_i$ has a uniform[-1,1] distribution independently of $X_i$. Selection is completely random with probability $p=P(S_i=1)=0.9$.
We set the sample size $n=1000$ and the tuning parameters $\eta_n=\varepsilon_n = 0.1\log n/\sqrt{n} \approx 0.022$.

\begin{figure}
    \centering
    \includegraphics[width=1\textwidth]{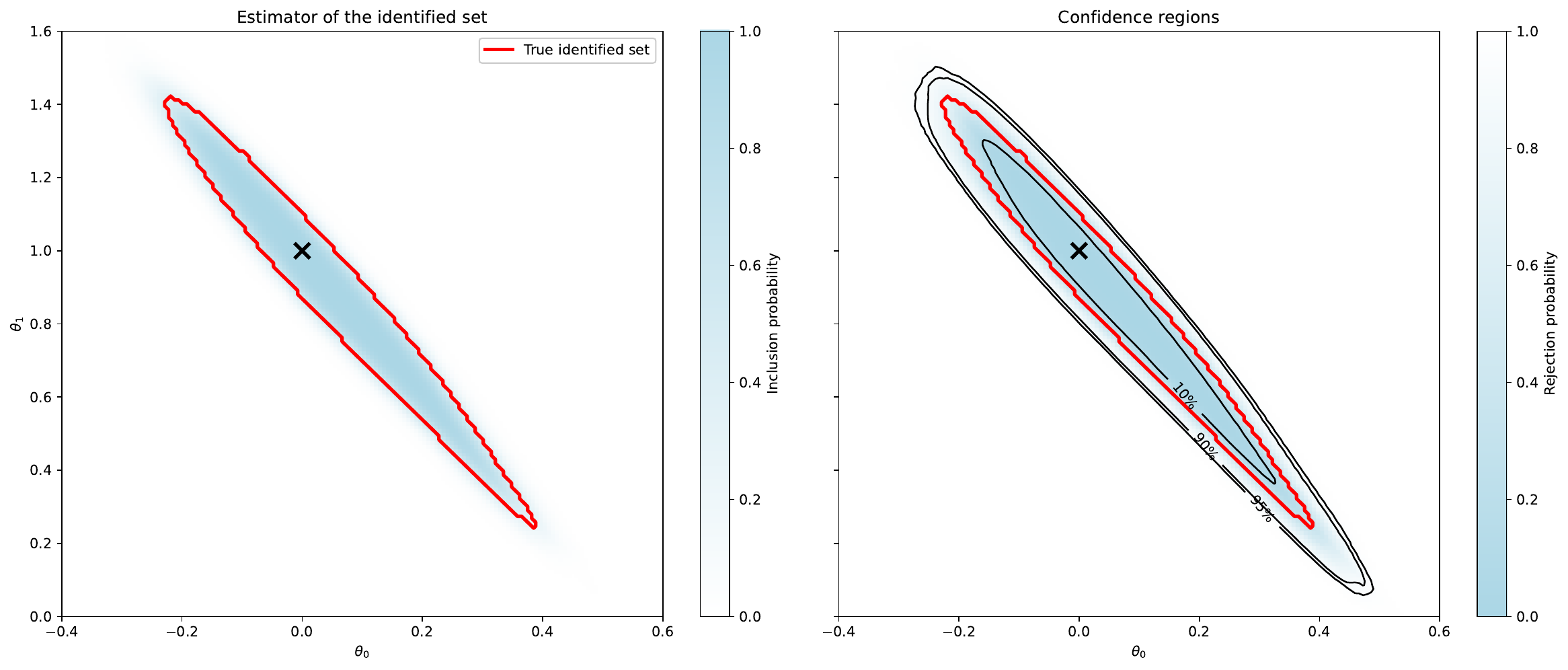}
    \caption{Estimates of the identified set and confidence regions with nominal sizes $1-\alpha \in \{0.10,0.90,0.95\}$ averaged over $1000$ simulations.}
    \label{fig:mc}
\end{figure}

The simulation results are illustrated in \Cref{fig:mc}. 
The true identified set (red contour line) is approximated using a sample of $10,000$ observations.
Our estimates of the identified set track the true identified set closely.
The confidence regions (three black contour lines) collect the hypothesized parameter values that are not rejected by the bootstrap test described in Section \ref{sec:inference} with the size $\alpha \in \{0.05,0.10,0.90\}$ and the number of bootstrap samples $B=1000$.
They are found to be quite narrow even at high nominal confidence levels $1-\alpha$.
\section{Conclusion}\label{sec:conclusion}

In this paper, we study the partial identification of parameters defined by GMM moment conditions when components of the data vector are missing in a completely unrestricted manner. A leading motivation is panel data subject to endogenous attrition.

We characterize the sharp identified set through the support function of the convex set of moment predictions consistent with the observed data.
For inference, we employ the minimum of the sample analog of the support function as a test statistic and develop a valid inference procedure, drawing on the bootstrap for directionally differentiable functionals of \citet{fang2019inference}.
We demonstrate that our estimator and confidence regions perform well in a set of Monte Carlo simulations.

\bibliographystyle{ecta}
\singlespacing
\bibliography{ref}

\onehalfspacing
\frenchspacing
\newpage
\appendix
\appendixpage

\section{Proof of \Cref{prop:id-set}}\label{app:id-set}

\textbf{Step 1: $N(\theta)$ is compact and convex}.

\noindent
Convexity is trivial.
Since $\mathcal{Z}_1\times\mathcal{Z}_2$ is compact, Theorem 15.11 and Corollary 15.7 in \citet{aliprantis2006infinite} imply that $\Pi^0$ is compact in the weak topology and the map $\pi^0 \mapsto \nu_{\pi^0}(\theta)$ is continuous.
Hence, $N(\theta)$ is compact as the image of a compact set under a continuous map.

\noindent
\textbf{Step 2: $\Theta_I = Q^{-1}(\{0\})$}.

\noindent
Since $N(\theta)$ is closed and convex, $0\in N(\theta)$ if and only if its support function is everywhere nonnegative: $\psi_{N(\theta)}(u) \ge 0$ for all $u\in \R^{\dim(\phi)}$.
Indeed, if $0\in N(\theta)$, then $\psi_{N(\theta)}(u) = \max_{\nu\in N(\theta)} u'\nu \ge 0$.
Conversely, if $0\notin N(\theta)$, then by strong separation of a point from the closed convex set $N(\theta)$, there exists $u\neq 0$ and $\alpha>0$ such that $u'\nu\le \alpha$ for all $\nu\in N(\theta)$.
This implies $\psi_{N(\theta)}(u) \le -\alpha < 0$.

Since $\psi_{N(\theta)}(0)=0$ for all $\theta$, nonnegativity of $\psi_{N(\theta)}$ is equivalent to the equality of
\begin{align}
    \min_{u\in \R^{\dim(\phi)}} \psi_{N(\theta)}(u) \label{eq:min-psi-Rd}
\end{align}
to zero.
Let us show that the latter is equivalent to the equality of
\begin{align}
    \min_{u\in \B} \psi_{N(\theta)}(u) \label{eq:min-psi-B}   
\end{align}
to zero.
Indeed, if \eqref{eq:min-psi-Rd} is zero, then the minimum is achieved at $u=0$, and hence \eqref{eq:min-psi-B} is zero.
Conversely, if \eqref{eq:min-psi-Rd} is nonzero, then there exists $u$ such that $\psi_{N(\theta)}(u) < 0$. By the positive homogeneity of the support function, $\psi_{N(\theta)}(u/\|u\|) = \psi_{N(\theta)}(u)/\|u\|<0$, and hence the expression \eqref{eq:min-psi-B} is negative.

\noindent
\textbf{Step 3: $\Theta_I$ is compact}.

\noindent
Since $\Theta_I = Q^{-1}(\{0\})$, it suffices to establish the continuity of $Q(\theta)$.
For this, notice that
\begin{align*}
    (u,\theta,\pi^0) \mapsto u'\nu_{\pi^0}(\theta)
\end{align*}
is a continuous function on the compact set $\B \times \Theta\times \Pi^0$, where $\Pi^0$ is equipped with the weak topology.
Berge's maximum theorem (see, e.g., Theorem 17.31 in \citet{aliprantis2006infinite}) implies that the function
\begin{align*}
    (u,\theta) \mapsto \max_{\pi^0\in\Pi^0}  u'\nu_{\pi^0}(\theta) = \psi_{N(\theta)}(u)
\end{align*}
is continuous on the compact set $\B \times \Theta$.
Applying Berge's theorem again implies that $Q(\theta)$ is continuous, completing the proof.

\section{Proof of \Cref{prop:reduction}}\label{app:reduction}

Denote $c(z_1,z_2)=u'\phi(z_1,z_2,\theta)$ and let $\Pi^0$ be the set of probability measures $\pi^0$ on $\mathcal Z_1\times\mathcal Z_2$ with the first marginal $\pi_1^0(\cdot)=\mathbb P(Z_1\in\cdot\mid S=0)$.

Fix any $\pi^0\in\Pi^0$. There exists a Markov kernel $\pi^0(dz_2\mid z_1)$ such that
\[
\pi^0(dz_1,dz_2)=\pi_1^0(dz_1)\,\pi^0(dz_2\mid z_1),
\]
see, e.g., Theorems 10.2.1 and 10.2.2 in \citet{dudley2018real}.
Hence
\[
\E_{\pi^0}[c(Z_1,Z_2)]
=
\int_{\mathcal Z_1}
\left(
\int_{\mathcal Z_2} c(z_1,z_2)\,\pi^0(dz_2\mid z_1)
\right)
\pi_1^0(dz_1).
\]
For each $z_1$,
\[
\int_{\mathcal Z_2} c(z_1,z_2)\,\pi^0(dz_2\mid z_1)
\le
\sup_{z_2\in\mathcal Z_2} c(z_1,z_2)
=
\max_{z_2\in\mathcal Z_2} c(z_1,z_2),
\]
where the equality follows from continuity of $c(z_1,\cdot)$ and compactness
of $\mathcal Z_2$.
Integrating and taking the maximum over $\pi^0\in\Pi^0$ yields
\begin{align}
    \max_{\pi^0\in\Pi^0} \E_{\pi^0}[c(Z_1,Z_2)]
\le \int \max_{z_2\in\mathcal Z_2} c(z_1,z_2)\, \pi_1^0(dz_1). \label{eq:ub}
\end{align}

We now establish a matching lower bound. Let $\{q(k)\}_{k\ge1}$ be a countable dense subset of $\mathcal Z_2$. For each $N\ge1$, define
\[
m_N(z_1):=\max_{1\le k\le N} c(z_1,q(k))
\]
and let $k_N(z_1)$ be an arbitrary element of the set
\[
\argmax_{1\le k\le N} c(z_1,q(k)) = \{k\in\{1,\dots,N\}: \,\, c(z_1,q(k)) = m_N(z_1)\}.
\]
Denote $q_N(z_1)=q(k_N(z_1))$ so that $m_N(z_1)=c(z_1,q_N(z_1))$. Let $\delta_{q_N(z_1)}(dz_2)$ be the point mass at $q_N(z_1)$. 
Define the joint distribution
\[
\pi_N(dz_1,dz_2):=\pi_1^0(dz_1)\,\delta_{q_N(z_1)}(dz_2).
\]
Then $\pi_N\in\Pi^0$ and
\[
\E_{\pi_N}[c(Z_1,Z_2)]
=
\int m_N(z_1)\,\pi_1^0(dz_1).
\]
Since $\{q(k)\}_{k\ge 1}$ is dense in the compact set $\mathcal{Z}_2$ and $c(z_1,\cdot)$ is continuous, we have, as $N\to \infty$, 
\[
m_N(z_1) = \max_{1\le k\le N} c(z_1,q(k)) \uparrow \max_{1\le k < \infty} c(z_1,q(k)) = \max_{z_2\in\mathcal Z_2} c(z_1,z_2) \quad\text{ for all } z_1 \in \mathcal{Z}_1.
\]
Moreover, $|m_N(z_1)|\le \sup_{z_2 \in \mathcal Z_2}|c(z_1,z_2)|\le C$ for some constant $C<\infty$ independent of $z_1$ due to continuity of $c(\cdot,\cdot)$ and compactness of $\mathcal Z_1$ and $\mathcal Z_2$. Hence, by dominated convergence,
\[
\lim_{N\to\infty} \int m_N(z_1)\,\pi_1^0(dz_1) = \int \max_{z_2\in\mathcal Z_2} c(z_1,z_2) \,\pi_1^0(dz_1).
\]
Therefore,
\[
\max_{\pi^0\in\Pi^0}\E_{\pi^0}[c(Z_1,Z_2)] \ge \sup_{N \ge 1} \E_{\pi_N}[c(Z_1,Z_2)]
= \int \max_{z_2\in\mathcal Z_2} c(z_1,z_2)\,\pi_1^0(dz_1).
\]
Combining with the upper bound \eqref{eq:ub} yields
\[
\max_{\pi^0\in\Pi^0} \E_{\pi^0}[c(Z_1,Z_2)] = \int \max_{z_2\in\mathcal Z_2} c(z_1,z_2)\, \pi_1^0(dz_1) = \E \left[ \max_{z_2\in\mathcal Z_2} c(Z_1,z_2) \,\middle|\, S=0 \right],
\]
which completes the proof.

\section{Proof of \Cref{prop:consistency}}\label{app:consistency}

We will follow the proof strategy in \citet{franguridi2025inference} and show that $\Theta_I \subset \hat\Theta_I$ w.p.a.\ 1 and that for every $\delta>0$, $\hat\Theta_I\subset \Theta_I^\delta$ w.p.a.\ 1, where $\Theta_I^\delta$ is the $\delta$-enlargement of $\Theta_I$, i.e.
    \[
      \Theta_I^\delta = \left\{\theta\in\Theta: \,\, d(\theta,\Theta_I) \le \delta \right\}.
    \]
    Compactness of $\Theta_I$ will then imply the required Hausdorff convergence.
  
    First, let us show that $\Theta_I \subset \hat\Theta_I$ w.p.a.\ 1.
    We have
    \begin{align*}
      \sup_{\theta\in\Theta_I} \hat Q_n(\theta) \le \sup_{\theta\in\Theta_I} Q(\theta) + \|\hat Q_n-Q\|_\infty =  \|\hat Q_n-Q\|_\infty.
    \end{align*}
    By \Cref{prop:uclt} and \Cref{as:eta}, the right-hand side is smaller than $\eta_n$ w.p.a.\ 1, and hence $\Theta_I \subset \hat\Theta_I$ w.p.a.\ 1.
  
    Now let us show that for every $\delta>0$, $\hat\Theta_I\subset \Theta_I^\delta$ w.p.a.\ 1.
    By \Cref{as:separation}, we have $\inf_{\theta\notin\Theta_I^\delta} Q(\theta) \ge m(\delta) > 0$.
    On the event $\|\hat Q_n-Q\|_\infty \le r_n$, we have
    \begin{align*}
      \inf_{\theta\notin\Theta_I^\delta} \hat Q_n(\theta) \ge \inf_{\theta\notin \Theta_I^\delta} Q(\theta) - \|\hat Q_n-Q\|_\infty \ge m(\delta) - r_n.
    \end{align*}
    Choose $n$ large enough so that $r_n \le \eta_n < m(\delta)/2$. Then 
    \begin{align*}
      \inf_{\theta\notin\Theta_I^\delta} \hat Q_n(\theta) \ge m(\delta)/2 > \eta_n,
    \end{align*}
    and hence no $\theta$ outside of $\Theta_I^\delta$ belongs to $\hat \Theta_I$. Therefore, $\hat\Theta_I\subset \Theta_I^\delta$ w.p.a.\ 1.

\section{Proof of \Cref{prop:uclt}}\label{app:uclt}

Define the data vector $W=(S,Z_1',S Z_2')'$ and let
\begin{align*}
    f_{u,\theta}(W) = S u' \phi(Z_1,S Z_2,\theta) + (1-S) \max_{z_2\in\mathcal Z_2} u' \phi(Z_1,z_2,\theta),
\end{align*}
so that $\hat\psi_{N(\theta)}(u)= \frac{1}{n} \sum_{i=1}^n f_{u,\theta}(W_i)$ and $\psi_{N(\theta)}(u)=\E f_{u,\theta}(W)$.
For brevity, drop the arguments $Z_1$ and $S Z_2$ in $\phi(Z_1,S Z_2,\theta)$ and $\max_{z_2}\phi(Z_1,z_2,\theta)$.
Write
\begin{align*}
    &f_{u_1,\theta_1}(W)-f_{u_2,\theta_2}(W) \\
    &= S(u_1'\phi(\theta_1) - u_2'\phi(\theta_2)) + (1-S) \left( \max_{z_2} u_1'\phi(z_2,\theta_1) - \max_{z_2} u_2'\phi(z_2,\theta_2) \right) \\
    &= S\left((u_1-u_2)'\phi(\theta_1) - u_2'(\phi(\theta_2)-\phi(\theta_1))\right) \\
    &+ (1-S)\left( \max_{z_2} u_1'\phi(z_2,\theta_1) - \max_{z_2} u_2'\phi(z_2,\theta_1) + \max_{z_2} u_2'\phi(z_2,\theta_1) - \max_{z_2} u_2'\phi(z_2,\theta_2) \right).
\end{align*}
By \Cref{a:phi-Lipschitz}, the term multiplied by $S$ can be bounded in absolute value by
\begin{align*}
    \|u_1-u_2\| \|\phi(Z_1,S Z_2,\theta_1)\| + \|u_2\| L(Z_1) \|\theta_1-\theta_2\|.
\end{align*}
For the first term multiplying $(1-S)$, we have
\begin{align*}
    \left|\max_{z_2} u_1'\phi(z_2,\theta_1) - \max_{z_2} u_2'\phi(z_2,\theta_1) \right| \le \max_{z_2} |(u_1-u_2)'\phi(z_2,\theta_1)| \le \|u_1-u_2\| \max_{z_2} \|\phi(Z_1,z_2,\theta_1)\|.
\end{align*}
Finally, for the second term multiplying $(1-S)$, we have
\begin{align*}
    &\left|\max_{z_2} u_2'\phi(z_2,\theta_1) - \max_{z_2} u_2'\phi(z_2,\theta_2) \right| \le \max_{z_2} |u_2'(\phi(z_2,\theta_1)-\phi(z_2,\theta_2))| \\
    &\le \|u_2\| \max_{z_2}\|\phi(z_2,\theta_1)-\phi(z_2,\theta_2)\| \le L(Z_1) \|\theta_1-\theta_2\|,
\end{align*}
where the last inequality follows from $\|u_2\|\le 1$ and \Cref{a:phi-Lipschitz}.

Combining the inequalities above with \Cref{a:phi-finite-variance} implies that the function class $\mathscr{F}=\{f_{u,\theta}:\,u\in \B,\,\theta\in\Theta\}$ is Lipschitz in the parameter with a square integrable constant.
By Theorem 2.7.17 in \citet{vaart2023empirical}, $\mathscr{F}$ is Donsker.

\section{Proof of \Cref{prop:lim-distribution}}\label{app:lim-distribution}

For any $\psi \in C(\B)$, the functional $\psi \mapsto \min_{u\in\B} \psi(u)$ is Hadamard directionally differentiable with the derivative
\begin{align*}
    h \mapsto \min_{u\in U(\psi)} h(u), \quad h\in C(\B),
\end{align*}
where
\[
U(\psi)=\arg\min_{u\in\B} \psi(u),
\]
as established in Theorem 3.1 in \citet{shapiro1991asymptotic}.
Applying the functional delta method (e.g., Theorem 3.10.5 in \citet{vaart2023empirical}) to \Cref{prop:uclt} completes the proof.

\section{Proof of \Cref{thm:bootstrap}}\label{app:bootstrap}

We need the following two technical lemmas, which are completely abstract and independent of our framework.

\begin{lemma}\label{lem:hausdorff-convergence}
    Suppose \Cref{as:sharp-minima,as:bandwidth} hold. Let $\hat\psi$ be random functions on $\B$ converging uniformly in probability to $\psi$. Then
    \[
    d_H(\hat U(\varepsilon_n), U) = o_p(1).
    \]
\end{lemma}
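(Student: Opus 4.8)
\textbf{Proof plan for Lemma \ref{lem:hausdorff-convergence}.}
The plan is to establish the two one-sided inclusions underlying the Hausdorff distance separately: (a) $\sup_{u\in \hat U(\varepsilon_n)} d(u,U) = o_p(1)$, i.e. every point in the enlarged empirical argmin set is close to $U$; and (b) $\sup_{u\in U} d(u,\hat U(\varepsilon_n)) = o_p(1)$, i.e. every point in $U$ is close to $\hat U(\varepsilon_n)$. Throughout I will write $m^* = \min_{v\in\B}\psi(v)$ and $\hat m^* = \min_{v\in\B}\hat\psi(v)$, and I will use the shorthand $r_n := \|\hat\psi - \psi\|_\B$, which is $o_p(1)$ by hypothesis and in fact $o_p(\varepsilon_n)$ by \Cref{as:bandwidth}. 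Note first that $|\hat m^* - m^*| \le r_n$, since the minimum is a $1$-Lipschitz functional of the function in the sup-norm.

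For inclusion (a), take any $u \in \hat U(\varepsilon_n)$, so $\hat\psi(u) \le \hat m^* + \varepsilon_n$. Then
\begin{align*}
    \psi(u) \le \hat\psi(u) + r_n \le \hat m^* + \varepsilon_n + r_n \le m^* + \varepsilon_n + 2r_n.
\end{align*}
Combining with the sharp-minimum bound of \Cref{as:sharp-minima}, $\psi(u) \ge m^* + \kappa\, d(u,U)$, gives $\kappa\, d(u,U) \le \varepsilon_n + 2r_n$, hence $\sup_{u\in\hat U(\varepsilon_n)} d(u,U) \le (\varepsilon_n + 2r_n)/\kappa = o_p(1)$, using $\varepsilon_n\downarrow 0$ and $r_n = o_p(1)$.

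For inclusion (b), fix $u^* \in U$, so $\psi(u^*) = m^*$. Then $\hat\psi(u^*) \le \psi(u^*) + r_n = m^* + r_n \le \hat m^* + 2r_n$. On the event $\{2r_n \le \varepsilon_n\}$ — which has probability tending to one by \Cref{as:bandwidth} — this yields $\hat\psi(u^*) \le \hat m^* + \varepsilon_n$, i.e. $u^* \in \hat U(\varepsilon_n)$, so $d(u^*,\hat U(\varepsilon_n)) = 0$. Hence on that event $\sup_{u\in U} d(u,\hat U(\varepsilon_n)) = 0$, which is $o_p(1)$. Combining (a) and (b) gives $d_H(\hat U(\varepsilon_n), U) = o_p(1)$.

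The argument is essentially elementary once the right quantities are tracked; the one place requiring care is the use of \Cref{as:bandwidth} in the strong form $r_n = o_p(\varepsilon_n)$ (not merely $r_n = o_p(1)$), which is what makes inclusion (b) work by ensuring $2r_n \le \varepsilon_n$ eventually with high probability. Inclusion (a), by contrast, only needs $r_n = o_p(1)$ and the separation constant $\kappa$ from \Cref{as:sharp-minima}; this is the step that genuinely uses the sharp-minima hypothesis, and it is the one that would fail if $\psi$ were flat near its argmin set.
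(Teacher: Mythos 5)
Your proof is correct and follows essentially the same route as the paper's: the same decomposition into the two one-sided bounds, the same use of \Cref{as:sharp-minima} with the bound $\kappa\, d(u,U) \le \varepsilon_n + 2\|\hat\psi-\psi\|_\B$ for the outer inclusion, and the same high-probability event $\{2\|\hat\psi-\psi\|_\B \le \varepsilon_n\}$ yielding $U \subset \hat U(\varepsilon_n)$ for the inner one. No gaps.
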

\begin{proof}
    The Hausdorff convergence is implied by
    \begin{align}
        \sup_{u\in\hat U(\varepsilon_n)} d_H(u, U) = o_p(1) \label{eq:outer-convergence}
    \end{align}
    and 
    \begin{align}
        \sup_{u\in U} d_H(u,\hat U(\varepsilon_n)) = o_p(1). \label{eq:inner-convergence}
    \end{align}
    First, let us establish \eqref{eq:outer-convergence}.
    Let $\delta_n=\|\hat\psi-\psi\|_\B$. For $v\in \hat U(\varepsilon_n)$, we have
    \begin{align*}
        \psi(v) \le \hat\psi(v)+\delta_n \le \min_{u\in\hat U(\varepsilon_n)}\hat\psi(u)  + \varepsilon_n + \delta_n \le  \min_{u\in U} \psi(u) + 2\delta_n + \varepsilon_n,
    \end{align*}
    where the second inequality holds by the definition of $\hat U(\varepsilon_n)$.
    Combining with Assumption \ref{as:sharp-minima} yields
    \begin{align*}
        \kappa \cdot d_H(v,U) \le \psi(v)- \min_{u\in U} \psi(u) \le 2\delta_n + \varepsilon_n.
    \end{align*}
    Taking the supremum over $v\in \hat U(\varepsilon_n)$ and noting that $2\delta_n +\varepsilon_n \to 0$ establishes \eqref{eq:outer-convergence}.

    Now let us establish \eqref{eq:inner-convergence}.
    Indeed, for any $u\in U$,
    \begin{align*}
        \hat\psi(u)\le \psi(u)+\delta_n =  \min_{u\in U} \psi(u) + \delta_n \le  \min_{u\in \hat U(\varepsilon_n)} \hat \psi(u) + 2\delta_n.
    \end{align*}
    Consider the event $E_n = \{\delta_n\le\varepsilon_n/2\}$. Using the inequality above, on this event,
    \begin{align*}
        \hat\psi(u)\le \min_{u\in \hat U(\varepsilon_n)} \hat \psi(u) + \varepsilon_n,
    \end{align*}
    which implies $u\in \hat U(\varepsilon_n)$, and hence $U \subset \hat U(\varepsilon_n)$. 
    By Assumption \ref{as:bandwidth}, $P(E_n)\to 1$, and so $U \subset \hat U(\varepsilon_n)$ w.p.a. 1.
    Therefore, $\sup_{u\in U} d_H(u,\hat U(\varepsilon_n)) = 0$ w.p.a. 1, establishing \eqref{eq:inner-convergence}.
\end{proof}

\begin{lemma}\label{lem:min-convergence}
    Suppose $d_H(\hat U,U)=o_p(1)$, where $\hat U,U$ are nonempty compact subsets of $\B$. Then for any continuous function $h$ on $\B$, 
    \[
    \min_{u\in\hat U} h(u) - \min_{u\in U} h(u) = o_p(1).
    \]
\end{lemma}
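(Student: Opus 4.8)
The plan is to exploit the uniform continuity of $h$ on the compact set $\B$ together with the one-sided Hausdorff bounds, carrying the entire argument on a high-probability event. Fix $\varepsilon>0$. Since $h$ is continuous on the compact set $\B$, it is uniformly continuous, so there is $\delta>0$ such that $\|u-v\|<\delta$ implies $|h(u)-h(v)|<\varepsilon$ for all $u,v\in\B$. Let $E_n=\{d_H(\hat U,U)<\delta\}$; by hypothesis $P(E_n)\to 1$, so it suffices to show $|\min_{u\in\hat U}h(u)-\min_{u\in U}h(u)|<\varepsilon$ on $E_n$.

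Next I would establish the two one-sided bounds. Because $U$ and $\hat U$ are nonempty and compact and $h$ is continuous, the minima are attained; pick $u^\ast\in U$ with $h(u^\ast)=\min_{u\in U}h(u)$ and $\hat u^\ast\in\hat U$ with $h(\hat u^\ast)=\min_{u\in\hat U}h(u)$. On $E_n$, the inclusion implicit in $d_H(\hat U,U)<\delta$ gives a point $\hat u\in\hat U$ with $\|\hat u-u^\ast\|<\delta$, hence
\[
\min_{u\in\hat U}h(u)\le h(\hat u)< h(u^\ast)+\varepsilon=\min_{u\in U}h(u)+\varepsilon.
\]
Symmetrically, there is $u\in U$ with $\|u-\hat u^\ast\|<\delta$, so $\min_{u\in U}h(u)\le h(u)< h(\hat u^\ast)+\varepsilon=\min_{u\in\hat U}h(u)+\varepsilon$. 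Combining the two displays yields $\big|\min_{u\in\hat U}h(u)-\min_{u\in U}h(u)\big|<\varepsilon$ on $E_n$.

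Finally I would conclude: for every $\varepsilon>0$, $P\big(|\min_{u\in\hat U}h(u)-\min_{u\in U}h(u)|\ge\varepsilon\big)\le P(E_n^c)\to 0$, which is exactly $\min_{u\in\hat U}h(u)-\min_{u\in U}h(u)=o_p(1)$. There is no serious obstacle here; the only point requiring a little care is that $\hat U$ is random, so one must phrase the deterministic uniform-continuity estimate conditionally on the event $E_n$ and only then pass to probabilities, rather than attempting to bound the random difference pathwise.
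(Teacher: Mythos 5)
Your proposal is correct and follows essentially the same route as the paper's proof: both pick minimizers on each set, use the Hausdorff distance to find nearby points in the other set, and transfer via (uniform) continuity of $h$; the paper phrases this through the modulus of continuity $\omega_h$ and the continuous mapping theorem, while you phrase it as an $\varepsilon$--$\delta$ argument on a high-probability event, which is the same estimate.
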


\begin{proof}
    The modulus of continuity of $h$ is
    \[
    \omega_h(\varepsilon) = \max_{u_1,u_2\in \B:\, \|u_1-u_2\| \le \varepsilon} |h(u_1)-h(u_2)|.
    \]
    Since $h$ is continuous and $\B$ is compact, $\omega_h(\varepsilon)\downarrow 0$ as $\varepsilon \downarrow 0$.

    Pick $u^*\in\arg\min_{u \in U} h(u)$. By definition of the Hausdorff distance, there exists $\hat u \in \hat U$ such that $\|\hat u-u^*\|\le d_H (\hat U,U)$. Then
    \begin{align*}
        \min_{u \in \hat U} h(u) \le h(\hat u) \le h(u^*) + \omega_h(d_H(\hat U,U)),
    \end{align*}
    which implies
    \begin{align*}
        \min_{u\in \hat U} h(u) - \min_{u\in  U} h(u) \le \omega_h(d_H(\hat U,U)).
    \end{align*}

    Similarly, pick any $\hat u^* \in \arg\min_{u \in \hat U} h(u)$. There exists $u \in U$ such that $\|\hat u^* - u\|\le d_H(\hat U,U)$. Then
    \begin{align*}
        \min_{u\in U} h(u) \le h(\hat u^*) + \omega_h(d_H(\hat U, U)),
    \end{align*}
    which implies
    \begin{align*}
        \min_{u\in U} h(u) - \min_{u\in \hat U} h(u) \le \omega_h(d_H(\hat U,U)).
    \end{align*}
    But since $\omega_h(\varepsilon)$ is continuous at $\varepsilon=0$ and $d_H(\hat U,U) = o_p(1)$, by the continuous mapping theorem, $\omega_h(d_H(\hat U,U)) = o_p(1)$, completing the proof.
\end{proof}

Let us show that the estimator $\hat\chi'$ is amenable to Remark 3.4 in \citet{fang2019inference}.
Indeed, let $\hat \Delta = \|h_1-h_2\|_{\hat U(\varepsilon_n)}$ (uniform norm on $\hat U(\varepsilon_n)$). Then $h_2(u)\ge h_1(u)-\hat \Delta$, and so 
\[
\min_{u\in \hat U(\varepsilon_n)} h_2(u) \ge \min_{u\in \hat U(\varepsilon_n)} h_1(u) - \hat \Delta.
\]
Swapping $h_1$ and $h_2$ yields the reverse inequality. Combining the two inequalities yields
\[
    \left| \hat\chi'(h_1)-\hat\chi'(h_2)\right| = \left| \min_{u\in \hat U(\varepsilon_n)} h_1(u) - \min_{u\in \hat U(\varepsilon_n)} h_2(u)\right| \le \hat \Delta \le \|h_1-h_2\|_B,
\]
i.e. $\hat \chi'$ is 1-Lipschitz.
Therefore, according to Remark 3.4, it suffices to establish pointwise consistency of $\hat\chi'(h)$, i.e., for any $h \in C(\B)$,
\begin{align*}
    \left|\hat\chi'(h)-\chi'(h) \right| = o_p(1).
\end{align*}
Indeed, \Cref{prop:uclt} implies that $\hat\psi$ converges uniformly in probability to $\psi$ on $\B$.
By \Cref{lem:hausdorff-convergence}, the argmin sets $\hat U(\varepsilon_n)$ converge to $U$ in the Hausdorff distance in probability. Using \Cref{lem:min-convergence} completes the proof.

\section{Proof of \Cref{prop:multi-period}}\label{app:multi-period}

Dropping the argument $\theta$ for brevity,
\begin{align*}
\E [\phi(Z_1,Z_2,Z_3)] &= p_3 \E[\phi(Z_1,Z_2,Z_3)|S_3=1] + (1-p_3)\E[\phi(Z_1,Z_2,Z_3)|S_3=0].
\end{align*}
Notice that $ \E[\phi(Z_1,Z_2,Z_3)|S_3=1]$ is point identified since $S_3=1$ implies $S_2=1$.
Then
\begin{align*}
    \E[\phi(Z_1,Z_2,Z_3)|S_3=0] &= p_{2|0} \E[\phi(Z_1,Z_2,Z_3)|S_2=1,S_3=0] \\
    &+(1-p_{2|0})\E[\phi(Z_1,Z_2,Z_3)|S_2=S_3=0].
\end{align*}
Arguing as in the proof of \Cref{prop:reduction} completes the proof.

\section{Proof of \Cref{thm:uniform-test-validity}}\label{app:uniform-test-validity}

We map our testing framework into Section 3.4.2 of \citet{fang2019inference} and then apply their Corollary 3.2. To this end, multiply our statistic by $-1$ to obtain
\begin{align*}
    T(\theta) &= \chi(\psi_\theta) = -\min_{u\in \B} \psi_{\theta}(u), \\
    \hat T(\theta) &= \chi(\hat \psi_\theta) = -\min_{u\in \B} \hat\psi_{\theta}(u),
\end{align*}
where
\begin{align*}
    \chi(h) = -\min_{u\in \B}h(u).
\end{align*}
This functional is Hadamard differentiable at $\psi_\theta$ with the derivative
\begin{align*}
    \chi_{\psi_\theta}'(h) = -\min_{u\in U} h(u),
\end{align*}
where $U = \arg\min_{u\in\B} \psi_\theta(u)$.
Notice that $T(\theta)=0$ if $\theta\in\Theta_I$ and $T(\theta) > 0$ otherwise.
Hence, we are testing the hypothesis $H_0: \chi(\psi_{\theta_0}) \le 0$ against $H_1: \chi(\psi_{\theta_0})>0$, as in equation (31) of \citet{fang2019inference}.

Let us show now that the functional $\chi'(h)$ is convex.
Indeed, for all $h_1,h_2$ and $u\in U$, we have
    \begin{align*}
         t h_1(u) + (1-t) h_2(u) \ge t \min_{u\in U} h_1(u) + (1-t) \min_{u\in U} h_2(u),
    \end{align*}
    and hence
    \begin{align*}
        -\chi'(t h_1 + (1-t) h_2) &= \min_{u\in U} \left\{t h_1(u) + (1-t) h_2(u) \right\} \\
        & \ge t \min_{u\in U} h_1(u) + (1-t) \min_{u\in U} h_2(u) = -t \chi'(h_1) - (1-t) \chi'(h_2).
    \end{align*}
    Multiplying this inequality by $-1$ establishes the convexity of $\chi'(h)$.
    Applying Corollary 3.2 in \citet{fang2019inference} completes the proof.

\end{document}